\newtheorem{theorem}{Theorem}[section]
\newtheorem{theorem*}{Theorem}
\newtheorem{lemma}[theorem]{Lemma}
\newtheorem{proposition}[theorem]{Proposition}
\newtheorem{proposition*}[theorem*]{Proposition}
\newtheorem{corollary}[theorem]{Corollary}
\newtheorem{corollary*}[theorem*]{Corollary}
\newtheorem{definition}[theorem]{Definition}
\theoremstyle{remark}
\newtheorem{remark*}[theorem*]{Remark}
\newtheorem{note*}[theorem*]{Note}
\newcommand{\EE}{{\mathbb E}}
\begin{document}

\title{On Profitability of Trailing Mining}

\subjclass[2010]{68M01, 60G40, 91A60.}
\keywords{Bitcoin, blockchain, proof-of-work, selfish mining, trailing mining, martingale, glambler's ruin, random walk.}

\author{Cyril Grunspan}
\address{L{\'e}onard de Vinci, P{\^o}le Universitaire, Research Center, Paris-La D{\'e}fense, Labex R{\'e}fi, France}
\email{cyril.grunspan@devinci.fr}

\author{Ricardo Perez-Marco}
\address{CNRS, IMJ-PRG, Labex R{\'e}fi, Paris, France}
\email{ricardo.perez.marco@gmail.com}
\address{\tiny Author's Bitcoin Beer Address (ABBA)\footnote{\tiny Send some anonymous and moderate satoshis to support our research at the pub.}:\newline{}\indent 1KrqVxqQFyUY9WuWcR5EHGVvhCS841LPLn} 

\address{\includegraphics[scale=0.5]{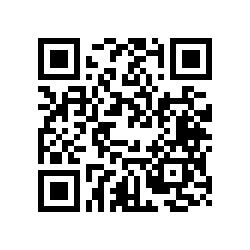}}

%\date{October 22, 2018}

\begin{abstract} 
  We compute the revenue ratio of the Trail Stubborn mining strategy in the Bitcoin network and
  compare its profitability to other block-withholding strategies. We use for this martingale techniques 
  and a classical analysis of the hiker problem.
  In this strategy the attacker could find himself mining in a shorter fork, but 
  we prove that for some parameter values it is still profitable to not give up. 
  This confirms previous numerical studies.
  %This coroborates
\end{abstract}

{\maketitle}

\section{Introduction}

In our previous article \cite{GPM2018} we gave a rigorous foundation for the profitability  
of alternative mining strategies in the Bitcoin network \cite{N08}. As for games with repetition, 
it depends on the proper 
analysis of the revenue and the duration over attack cycles. 
More precisely, the expected revenue $\EE[R]$ and expected duration $\EE[\tau]$ 
over an attack cycle give the ``Revenue Ratio''
$$
\Gamma=\frac{\EE[R]}{\EE[\tau]}
$$
This is the correct benchmark for the profitability of the strategy.

\medskip

This analysis was also carried out previously by the authors in \cite{GPM2018} to the ``Selfish Mining'' 
(SM) strategy from \cite{ES14} and 
in \cite{GPM2018-2} to 
``Lead-Stubborn Mining'' (LSM) and ``Equal Fork 
Stubborn Mining'' (EFSM) strategies from \cite{NKMS2016}. In these articles we found 
for these strategies the exact mathematical formula for the 
revenue ratios and we compared  its profitability in parameter space.

The main technique for these derivations is the application of martingale techniques 
introduced in \cite{GPM2018} that yield, using Doob's Stopping Time Theorem, the expected 
duration of the attack cycles. We assume in the computation of the Revenue Ratio 
that there is no difficulty adjustment inside the attack cycles, or that 
$E[\tau]$ is much shorter than the period of difficulty adjustment so that its effect in the cycles 
can be neglected. But, as we proved in \cite{GPM2018}, the effect of these attacks is to slow down the 
network, and it is only after a 
difficulty adjustment that these ``block withholding strategies'' can become profitable. Then
the profitability can also be read on the apparent hashrate. So these rogue strategies are an exploit 
on the difficulty adjustment formula and we gave in \cite{GPM2018} an improvement proposal of the Bitcoin
protocol to fix the difficulty adjustment formula.

\medskip

In this article we apply again these new techniques 
to another block withholding strategy: The ``Trail-Stubborn Mining'' strategy from \cite{NKMS2016}.
In this strategy, the block withholder miner does not give up when the 
honest chain takes over some block advantage, but instead keeps mining on top of his secret 
chain and only gives up if the advantage of the honest chain reaches $A\geq 1$ blocks. 
We talk about ``$A$-Trail-Stubborn Mining'' strategy or ${\text{TSM}}_A$ in short.

We denote by $b>0$ be the block reward, and $\tau_0$ the average inter-block 
validation time for the total network (around $10$ minutes for the Bitcoin network). We denote 
by $q$ (resp. $p$) the relative hashing power of the attacker (resp. honest miners) and $\lambda=q/p<1$. Let 
$\gamma$ be the fraction of the honest network that the attacker attracts to mine on top of his fork. 
For a miner that after a difficulty adjustment has a Revenue Ratio $\tilde \Gamma$ we define 
his apparent hashrate $\tilde q$ by
$$
\tilde q= \frac{\tilde \Gamma \cdot \tau_0}{b} \ .
$$
The apparent hashrate of a miner can also be defined after a difficulty adjustment as the average
proportion of blocks mined by the miner in the official blockchain.
Our main Theorem is:

\begin{theorem*}[$A$-Trail-Stubborn mining] \label{thm_main}
  Let $A\geq 1$. The revenue ratio of the ``$A$-Trail-Stubborn mining'' strategy is 
%  \begin{equation*}
%    \Gamma  =  \frac{\left( \frac{p + pq - q^2}{p - q} \right) q + (1 -
%    \gamma) pq \left[ \left( 1 + \frac{1}{p} \cdot \frac{1 - A
%    \lambda^{A - 1} + A \lambda^{A+1} - \lambda^{2 A}}{(1 - \lambda)  (1
%    - \lambda^{A - 1})  (1 - \lambda^{A+1})} \right)  \frac{\lambda^2 -
%    \lambda^{A+1}}{1 - \lambda^{A+1}} - \frac{2 p}{\sqrt{1 - 4 (1 - \gamma) pq} \, + p -
%    q}  \frac{1 - \lambda^2}{1 - \lambda^{A+1}} \right]}
%    {\frac{p + pq - q^2}{p - q} + \frac{(1 - \gamma) pq}{p - q} (A+1) \cdot 
%    \left( \frac{1 - \lambda^2}{1 - \lambda^{A+1}} - \frac{2}{A+1} \right)} 
%    \frac{b}{\tau_0}
%  \end{equation*}
 
  $$
  \Gamma  =  \frac{q + \frac{(1 - \gamma) pq (p - q)}{(p + pq - q^2) [A + 1]}  \left(
  \left( [A - 1] + \frac{1}{p}  \frac{P_A (\lambda)}{[A + 1]} \right) \lambda^2
  - \frac{2}{\sqrt{1 - 4 (1 - \gamma) pq} + p - q}  \right)}{1 + \frac{(1 -
  \gamma) pq}{p + pq - q^2}  (A + 1)  \left( \frac{[2]}{[A + 1]} - \frac{2}{A +
  1} \right)}\, \frac{b}{\tau_0}
  $$
  
  where $[n]=\frac{1-\lambda^n}{1-\lambda}$ for $n \in \mathbb{N}$, 
  and $P_A (\lambda) = \frac{1 - A \lambda^{A - 1} + A \lambda^{A + 1} - \lambda^{2 A}}{(1 - \lambda)^3}$.

After a difficulty adjustment, the apparent hashrate  of the stubborn miner is
%$$
%  \tilde{q} =  \frac{\left( \frac{p + pq - q^2}{p - q} \right) q + (1 -
%  \gamma) pq \left[ \left( 1 + \frac{1}{p} \cdot \frac{1 - A
%  \lambda^{A - 1} + A \lambda^{A+1} - \lambda^{2 A}}{(1 - \lambda)  (1
%  - \lambda^{A - 1})  (1 - \lambda^{A+1})} \right)  \frac{\lambda^2 -
%  \lambda^{A+1}}{1 - \lambda^{A+1}} - \frac{2 p}{\sqrt{1 - 4 (1 - \gamma) pq} \, + p -
%  q}  \frac{1 - \lambda^2}{1 - \lambda^{A+1}} \right]}
%  {\frac{pq + p - q}{p - q} + \frac{(1 - \gamma) pq}{p - q} 
%  \left( (A p + q)  \frac{1 -\lambda^2}{1 - \lambda^{A+1}} - 1 \right)}
%$$
  $$
  \tilde{q}  =  \frac{q + \frac{(1 - \gamma) pq (p - q)}{(p + pq - q^2) [A + 1]}  \left(
  \left( [A - 1] + \frac{1}{p}  \frac{P_A (\lambda)}{[A + 1]} \right) \lambda^2
  - \frac{2}{\sqrt{1 - 4 (1 - \gamma) pq} + p - q}  \right)}
  {\frac{p + pq - q}{p + pq - q^2} + \frac{(1 - \gamma) pq}{p + pq - q^2} 
  (A + \lambda) \left( \frac{1}{[A + 1]} - \frac{1}{A+\lambda} \right)}
  $$
%   with $[n]=\frac{1-\lambda^n}{1-\lambda}$ for $n \in \mathbb{N}$ 
%   and $P_A (x) = \frac{1 - A x^{A - 1} + A x^{A + 1} - x^{2 A}}{(1 - x)^3}$ for $x \in \mathbb{R}$.
\end{theorem*}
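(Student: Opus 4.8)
\emph{Idea of the proof.} The plan is to follow the martingale-and-renewal-reward scheme used in \cite{GPM2018,GPM2018-2}. First I would set up the Markov chain that governs one attack cycle, a cycle being the stretch between two consecutive visits to the state ``no fork, nothing withheld'' (state $0$). I enumerate the reachable states --- $0$; the one-block lead $1$; the balanced fork $0'$; the deeper leads $k\ge 2$ (all blocks still withheld); and the trailing states $-1,\dots,-(A-1)$, in which the attacker's chain is shorter than the honest one but he keeps mining --- with $-A$ absorbing (the attacker concedes and forfeits everything he withheld). For every transition I record its probability in terms of $p,q,\gamma$, and for every transition that ends the cycle also the attacker's gain $Z$ (his blocks surviving in the official chain), the number $\tau$ of blocks mined, and the number $L$ of blocks entering the official chain. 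Already here the bookkeeping is genuinely two-dimensional: beyond the current lead/gap one has to carry the running length of the attacker's chain, because that length is exactly the payoff at a successful publication.

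Second, I would compute the expected duration. Before a difficulty adjustment blocks are found at Poisson rate $1/\tau_0$, so the expected real length of a cycle is $\tau_0\,\EE[\tau]$. The discovery sequence is Bernoulli$(q)$; hence if $D_n$ is the attacker's lead in \emph{blocks mined} (attacker minus honest) after $n$ discoveries, $D_n-(q-p)n$ is a martingale, and Doob's optional stopping theorem at the cycle's end gives $\EE[\tau]=\EE[D_\tau]/(q-p)$. The terminal value $D_\tau$ takes only the values $-1$ (the honest chain wins a lone block), $+1$ (the attacker publishes a strictly longer chain), $-1$ (an honest miner seals the attacker's winning fork), and $-A$ (the attacker concedes); thus $\EE[\tau]$ is a fixed linear combination of the probabilities of these four terminal events, and in particular the attacker's chain length does not enter it.

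Third --- and this is the core of the argument --- I would compute those terminal probabilities, and for the revenue the conditional expected chain lengths, by the classical analysis of the hiker problem, i.e.\ the first-passage theory of a biased one-dimensional walk. On the trailing states this is a gambler's ruin on $\{-A,\dots,0\}$ with a partially absorbing barrier at $0$ (from $0'$ the fork is either sealed, ending the cycle, or lost, dropping to $-1$) and an absorbing one at $-A$; first-step analysis gives homogeneous recurrences $u_{-j}=q\,u_{-j+1}+p\,u_{-j-1}$ with general solution $\alpha+\beta\lambda^{j}$, together with inhomogeneous companions for the expected times and the expected chain lengths. Imposing the boundary data yields exactly the symbols $[n]=\frac{1-\lambda^{n}}{1-\lambda}$ and, through the first-moment sums $\sum_j j\lambda^{j}$, the polynomial $P_A(\lambda)=\frac{1-A\lambda^{A-1}+A\lambda^{A+1}-\lambda^{2A}}{(1-\lambda)^{3}}$. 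The attacker may lose a fork, crawl back up, rebuild another fork, lose it again, and so on, so one must re-sum the geometric series over these successive fork rounds --- equivalently, solve the fixed-point equation for one such compound excursion, as in the hiker analysis of \cite{GPM2018-2}; this is what brings in the surd $\sqrt{1-4(1-\gamma)pq}$, whose handling is eased by the identity $\big(\sqrt{1-4(1-\gamma)pq}+p-q\big)\big(\sqrt{1-4(1-\gamma)pq}-(p-q)\big)=4\gamma pq$.

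Finally I would assemble the pieces. The expected revenue is $\EE[R]=b\,\EE[Z]$, obtained by conditioning on the terminal event and inserting the conditional chain-length expectations from the previous step; dividing by $\tau_0\,\EE[\tau]$ gives $\Gamma$. For the apparent hashrate I use that after a difficulty readjustment the official chain again grows by one block every $\tau_0$, so that $\tilde q=\EE[Z]/\EE[L]$, with $L$ the number of blocks entering the official chain per cycle; $\EE[L]$ follows either from the same conditioning or, more economically, from the accounting identity $\tau = 2L-1-(A-1)\,\mathbf 1_{\{\text{attacker concedes}\}}$ (each discovery is attacker's or honest's, and the official chain at the end has length equal to the sum of both parties' surviving blocks), which gives $\EE[L]=\tfrac12\big(\EE[\tau]+1+(A-1)\,\PP[\text{concede}]\big)$; this is precisely why the denominator of $\tilde q$ differs from that of $\Gamma$ only through the concession probability. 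Collecting numerator and denominator over a common factor then produces the two displayed closed forms. I expect the two genuine obstacles to be the two-dimensional bookkeeping of (lead/gap, chain length) across an unbounded number of fork rounds, and the resummation yielding the surd; the remainder is the classical hiker-problem algebra plus routine simplification.
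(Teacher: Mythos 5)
Your overall framework---renewal--reward over attack cycles, the Wald/Doob argument giving $\mathbb{E}[\tau]=\mathbb{E}[D_\tau]/(q-p)$ with $D_\tau\in\{-1,+1,-A\}$, the hiker problem for the trailing phase, and the accounting identity $\mathbb{E}[L]=\tfrac12\left(\mathbb{E}[\tau]+1+(A-1)\,\mathbb{P}[\text{concede}]\right)$ for the official chain length (which does reproduce the paper's $\mathbb{E}[N(\xi)\vee N'(\xi)]$, and is a neat shortcut the paper does not use)---is sound and close to the paper's. But the core of the revenue computation is misdescribed in a way that would not lead to the stated formulas. In the paper's model, once the attacker loses the single decisive competition (event $\mathbf{\Sigma}$, probability $(1-\gamma)pq$), he wins the second phase only by \emph{strictly leading} the official chain by one block; reaching equal height triggers no further competition. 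The delay therefore performs a plain two-absorbing-barrier walk on $\{-1,0,\dots,A\}$ started at $1$ (the hiker on $[0,A+1]$ started at $2$), whose exit probability $\frac{\lambda^2-\lambda^{A+1}}{1-\lambda^{A+1}}=\lambda^2\frac{[A-1]}{[A+1]}$ is exactly the factor multiplying $\lambda^2$ in the theorem. Your ``partially absorbing barrier at $0$'' with repeated sealed-or-lost fork rounds and a geometric re-summation over successive fork rounds is a different chain: there is at most one fork-building phase and one trailing phase per cycle, and your model would not produce the factor $[A-1]/[A+1]$.

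Consequently you also misattribute the surd. $\sqrt{1-4(1-\gamma)pq}$ does not arise from re-summing trailing excursions; it arises in the \emph{first} phase, from the Catalan generating function $C(x)=\frac{1-\sqrt{1-4x}}{2x}$ evaluated at $(1-\gamma)pq$ when one sums $\mathbb{E}[Z(\tau)\mid N'(\tau)=n]=n-\frac{1-(1-\gamma)^n}{\gamma}$ against $\mathbb{P}[N'(\tau)=n]=C_{n-1}(pq)^n$. That quantity is the expected number of the attacker's blocks already locked into the official chain when he is caught up---it is all he salvages when he ultimately concedes---and it is exactly the second coordinate your ``two-dimensional bookkeeping'' would have to carry. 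The paper avoids enlarging the state space by conditioning on $N'(\tau)=n$ and importing this closed form from the LSM analysis; without that ingredient, and with the trailing phase corrected as above, your plan has no route to the term $\frac{2}{\sqrt{1-4(1-\gamma)pq}+p-q}$. The remaining piece, $P_A(\lambda)$, comes from Stern's conditional expected duration $\mathbb{E}[\nu_2\mid\nu_2=\nu_{2,0}]$ (proved in the paper via an $h$-Doob transform), which your inhomogeneous first-step recurrences could in principle recover.
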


  The polynomial $1 - A X^{A - 1} + A X^{A + 1} - X^{2 A}$ vanish at $X = 1$, as well as its two first
  derivatives, hence $P_A (X)$ is a polynomial in $\mathbb{Z} [X]$. 
  Making $A=1$ in Theorem \ref{thm_main} we get Theorem 1 of \cite{GPM2018-2} as a particular case.
  Indeed, we have ${\text{TSM}}_1 = {\text{LSM}}$, i.e. $1$-Trail-Stubborn Mining and Lead-Stubborn Mining 
  strategies are the same. Before proving Theorem \ref{thm_main} we need to study a classical refinement 
  of the Gambler's Ruin Problem: The hiker problem.

\section{The hiker problem.} \label{sec_hicker}

We consider a hiker on $[ 0, M ]$ with $M \in \mathbb{N}$, $M\geq 2$. 
His position is denoted by the random process $(\mathbf{X}_n)_{n \in
\mathbb{N}}$. The transition probability from $i$ to $j$ are
$$ 
P (i, j) =\mathbb{P} [\mathbf{X}_{n + 1} = j| \mathbf{X}_n = i] =
   p\, {\bf{1}}_{i = j - 1} + q\, {\bf{1}}_{i = j + 1} 
$$
for $(i, j) \in [ 1, M - 1 ] \times [ 0, M
]$. It is independent of $n \in \mathbb{N}$. We make the assumption
that $0$ and $M$ are absorbing boundaries: $P (0, 0) = P (M, M) = 1$. 
The problem is studied in \cite{F} where it is proved that with probability $1$ the hiker exits 
$[ 1, M - 1 ]$. We need more precise information.

\begin{definition}
  \label{nm}For $k \in \{ 0, M \}$ and $m \in [ 0, M ]$, let
  $\nu_{m, k} \in \mathbb{N} \cup \{ \infty \}$ be the stopping time defined
  by 
  $$
  \nu_{m, k} = \inf \{ n ; \mathbf{X}_n = k| \mathbf{X}_0 = m \}
  $$ 
  
\end{definition}

We denote $\nu_m = \nu_{m, 0} \wedge \nu_{m, M}$, i.e. 
$\nu_m$ is the stopping time for exiting $[ 1, M -1 ]$ 
starting from $m$. From \cite{F}, we have
$\nu_m <+\infty$ almost surely. The condition $\nu_m = \nu_{m, 0}$
is equivalent to the realization of the event ``the hiker exits $[ 1, M - 1 ]$ 
at $1$''.

\begin{theorem} \label{en}
We have:
  \begin{align*}
  \mathbb{E} [\nu_m] & = \frac{M}{p - q} \cdot \left( \frac{1 -
    \lambda^m}{1 - \lambda^M} - \frac{m}{M} \right)\\
    \mathbb{P} [\nu_m = \nu_{m, 0}] & = \frac{\lambda^m - \lambda^M}{1 -
    \lambda^M}\\
    \mathbb{E} [\nu_m | \nu_m = \nu_{m, 0}] & = \frac{m \lambda^m - (2 M - m) 
    \lambda^M + (2 M - m) \lambda^{M + m} - m \lambda^{2 M}}{p (1 -
    \lambda)  (\lambda^m - \lambda^M)  (1 - \lambda^M)}
  \end{align*}
\end{theorem}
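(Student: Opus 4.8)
The plan is to obtain all three identities from Doob's Stopping Time Theorem applied at the stopping time $\nu_m$ to three explicit martingales attached to the walk killed upon leaving $[1,M-1]$, in the spirit of \cite{GPM2018}. As a preliminary I would establish the a priori bound $\mathbb{E}[\nu_m]<\infty$: evaluating the martingale $\mathbf{X}_{n\wedge\nu_m}-(p-q)(n\wedge\nu_m)$ at a deterministic time $N$ gives $(p-q)\,\mathbb{E}[\nu_m\wedge N]=\mathbb{E}[\mathbf{X}_{\nu_m\wedge N}]-m\le M$, and letting $N\to\infty$ by monotone convergence yields $\mathbb{E}[\nu_m]\le M/(p-q)$ (alternatively one cites \cite{F} for a.s.\ finiteness plus the usual geometric tail estimate). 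This finiteness is exactly what licenses the optional stopping below, since then each stopped process is dominated by the integrable random variable $M+(p-q)\nu_m$.

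The first two identities are the classical gambler's ruin formulas and I would recover them as follows. Using the bounded martingale $\lambda^{\mathbf{X}_n}$ — genuinely a martingale for the killed walk since $0,M$ are absorbing and $\lambda^0=1$ — Doob's theorem gives $\lambda^m=\mathbb{P}[\nu_m=\nu_{m,0}]+\lambda^M\,\mathbb{P}[\nu_m=\nu_{m,M}]$, which together with $\mathbb{P}[\nu_m=\nu_{m,0}]+\mathbb{P}[\nu_m=\nu_{m,M}]=1$ produces $\mathbb{P}[\nu_m=\nu_{m,0}]=\frac{\lambda^m-\lambda^M}{1-\lambda^M}$. Next, applying Doob's theorem to $\mathbf{X}_n-(p-q)n$ (dominated by $M+(p-q)\nu_m$) gives $m=\mathbb{E}[\mathbf{X}_{\nu_m}]-(p-q)\,\mathbb{E}[\nu_m]$ with $\mathbb{E}[\mathbf{X}_{\nu_m}]=M\,\mathbb{P}[\nu_m=\nu_{m,M}]$, whence the stated formula for $\mathbb{E}[\nu_m]$.

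The new input is the martingale for the third identity: $N_n:=(\mathbf{X}_n+(p-q)n)\,\lambda^{\mathbf{X}_n}$, stopped at $\nu_m$. Its martingale property follows from the elementary identity $p(i+1)\lambda^{i+1}+q(i-1)\lambda^{i-1}=\lambda^i\bigl(i-(p-q)\bigr)$ combined with the martingale property of $\lambda^{\mathbf{X}_n}$, which together give $\mathbb{E}[N_{n+1}\mid\mathcal F_n]=N_n$ on $\{n<\nu_m\}$; since $|N_{n\wedge\nu_m}|\le M+(p-q)\nu_m$ is integrable, Doob's theorem applies and yields
\begin{equation*}
m\lambda^m=(p-q)\,\mathbb{E}\bigl[\nu_m\mathbf{1}_{\nu_m=\nu_{m,0}}\bigr]+\lambda^M\Bigl(M\,\mathbb{P}[\nu_m=\nu_{m,M}]+(p-q)\,\mathbb{E}\bigl[\nu_m\mathbf{1}_{\nu_m=\nu_{m,M}}\bigr]\Bigr).
\end{equation*}
Using $\mathbb{E}[\nu_m\mathbf{1}_{\nu_m=\nu_{m,0}}]+\mathbb{E}[\nu_m\mathbf{1}_{\nu_m=\nu_{m,M}}]=\mathbb{E}[\nu_m]$ and the first two identities, I would solve for $\mathbb{E}[\nu_m\mathbf{1}_{\nu_m=\nu_{m,0}}]$ and divide by $\mathbb{P}[\nu_m=\nu_{m,0}]$ to get the conditional expectation.

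I expect two points to need care. The real trick is guessing $N_n$; once it is written down everything else is bookkeeping, apart from (i) the integrability justification of optional stopping for $N_n$, whose increments are unbounded because of the factor $n$ — handled precisely by the domination $|N_{n\wedge\nu_m}|\le M+(p-q)\nu_m$ together with $\mathbb{E}[\nu_m]<\infty$ — and (ii) the final algebraic reduction: after clearing the factor $1-\lambda^M$ and using $p-q=p(1-\lambda)$ one must check that the numerator collapses to $m\lambda^m-(2M-m)\lambda^M+(2M-m)\lambda^{M+m}-m\lambda^{2M}$ and the denominator to $p(1-\lambda)(\lambda^m-\lambda^M)(1-\lambda^M)$, giving exactly the stated expression for $\mathbb{E}[\nu_m\mid\nu_m=\nu_{m,0}]$.
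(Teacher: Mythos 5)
Your proof is correct, and for the key third identity it takes a genuinely different route from the paper's. The first two formulas you obtain essentially as the paper does (optional stopping for the bounded martingale $\lambda^{\mathbf{X}_n}$ and for $\mathbf{X}_n-(p-q)n$; the paper cites Feller and reproves the first identity in Appendix~\ref{app_A} by the same Doob argument), and your preliminary bound $\mathbb{E}[\nu_m]\le M/(p-q)$ correctly licenses all the passages to the limit. For the conditional expectation, however, the paper's Appendix~\ref{rw} conditions the walk on exiting at $0$ via the Doob $h$-transform $h(i,j)=\frac{\lambda^j-\lambda^M}{\lambda^i-\lambda^M}$, derives a first-order recursion for the increments $\alpha_{i,M}=v_{i,M}-v_{i-1,M}$ of the conditioned expected exit times, identifies them with the auxiliary sequence $u_n$ of Definition~\ref{equndef}, and then needs the closed forms (\ref{closedformun})--(\ref{closedformpartialun}), which are guessed and verified by induction. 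You instead stop the single martingale $N_n=(\mathbf{X}_n+(p-q)n)\lambda^{\mathbf{X}_n}$; I checked the identity $p(i+1)\lambda^{i+1}+q(i-1)\lambda^{i-1}=\lambda^i\bigl(i-(p-q)\bigr)$, the domination $0\le N_{n\wedge\nu_m}\le M+(p-q)\nu_m$, and the final algebra: one indeed gets $(p-q)\,\mathbb{E}[\nu_m\mathbf{1}_{\nu_m=\nu_{m,0}}]\,(1-\lambda^M)^2=m\lambda^m-(2M-m)\lambda^M+(2M-m)\lambda^{M+m}-m\lambda^{2M}$, and dividing by $\mathbb{P}[\nu_m=\nu_{m,0}]=\frac{\lambda^m-\lambda^M}{1-\lambda^M}$ and writing $p-q=p(1-\lambda)$ gives exactly the stated expression. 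Your route is shorter and avoids both the $h$-transform and the guess-and-verify step for $u_n$; what the paper's route buys is the conditioned kernel $\tilde{\mathbb{P}}$ and the sequence $u_n$ as explicit objects, but since their only downstream use (equation (\ref{elnu20}) via Corollary~\ref{lrs}) is a specialization of the theorem itself, your proof would serve the paper equally well. The trade-off is that your argument hinges on producing the martingale $N_n$ ex nihilo, whereas the recursion of Appendix~\ref{rw} is forced once the $h$-transform is written down.
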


The first two equations are well known classical results that can be found in \cite{F} p. $314$ and p.$317$. 
The last equation is from \cite{St} and is not so classical and, to be self-contained, we give another proof in Appendix \ref{rw}.

\begin{corollary}
  We have $\underset{M \rightarrow \infty}{\lim} \mathbb{E} [\nu_m | \nu_m =
  \nu_{m, 0}] = \frac{m}{p - q}$.
\end{corollary}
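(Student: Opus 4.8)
The plan is simply to substitute the closed form for $\mathbb{E}[\nu_m\mid \nu_m=\nu_{m,0}]$ from Theorem~\ref{en} and pass to the limit term by term. Recall that
$$
\mathbb{E}[\nu_m \mid \nu_m = \nu_{m,0}] = \frac{m\lambda^m - (2M-m)\lambda^M + (2M-m)\lambda^{M+m} - m\lambda^{2M}}{p(1-\lambda)(\lambda^m - \lambda^M)(1-\lambda^M)}\, .
$$
Since $\lambda=q/p\in(0,1)$, the quantities $\lambda^M$, $\lambda^{M+m}$ and $\lambda^{2M}$ tend to $0$ as $M\to\infty$, and moreover $M\lambda^M\to 0$ because exponential decay beats polynomial growth. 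This last observation is the only point that deserves a line of justification, and it is exactly what kills the mixed terms $(2M-m)\lambda^M$ and $(2M-m)\lambda^{M+m}$ in the numerator.

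Consequently the numerator converges to $m\lambda^m$ while the denominator $p(1-\lambda)(\lambda^m-\lambda^M)(1-\lambda^M)$ converges to $p(1-\lambda)\lambda^m$. Dividing and cancelling the nonzero factor $\lambda^m$ yields the limit $\frac{m}{p(1-\lambda)}$, and since $p\lambda=q$ we have $p(1-\lambda)=p-q$, which is the asserted value. There is no real obstacle here beyond the elementary fact that $M\lambda^M\to 0$; the computation is just a careful reading of which terms dominate.

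As a reassuring consistency check, this matches the gambler's ruin picture: sending the right wall $M$ to infinity, the stopping time $\nu_m$ conditioned on $\{\nu_m=\nu_{m,0}\}$ degenerates to the hitting time of $0$ starting from $m$ for the walk conditioned to reach $0$, whose expectation for a walk with step probabilities $p>q$ is known to be $\frac{m}{p-q}$. One could alternatively derive the corollary directly from this limiting random-walk description, but taking the limit in the explicit formula is shorter and self-contained.
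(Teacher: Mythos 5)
Your proof is correct and is exactly the intended argument: the paper states this corollary without proof as an immediate consequence of the explicit formula in Theorem~\ref{en}, obtained by letting $M\to\infty$ and using $M\lambda^M\to 0$ together with $p(1-\lambda)=p-q$. Nothing to add.
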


\begin{definition}
  We denote by $\mathcal{L} (n)$ (resp. $\mathcal{R} (n)$) the number of steps
  to the left (resp. right) realized by the hiker between $t = 0$ and $t = n$.
\end{definition}

In other terms, $\mathcal{L} (0) =\mathcal{R} (0) = 0$ and for $n \leq
\nu_m,$
\begin{align*}
  \mathcal{L} (n) & =  \mathcal{L} (n - 1) +{\bf{1}}_{\mathbf{X} (n) =
  \mathbf{X} (n - 1) - 1}\\
  \mathcal{R} (n) & =  \mathcal{R} (n - 1) +{\bf{1}}_{\mathbf{X} (n) =
  \mathbf{X} (n - 1) + 1}
\end{align*}
Note that $\mathcal{L} (n) +\mathcal{R} (n) = n$ for $n \leq \nu_m$.

\begin{corollary} \label{lrs}
We have 
\begin{align*}
  \mathbb{E} [\mathcal{L} (\nu_m) | \nu_m = \nu_{m, 0}] & =  \frac{m}{2} + 
  \frac{m \lambda^m - (2 M - m) \lambda^M + (2 M - m) \lambda^{M + m} - m \lambda^{2 M}}
  {2 p (1 - \lambda)  (\lambda^m - \lambda^M)  (1 - \lambda^M)}\\
  \mathbb{E} [\mathcal{R} (\nu_m) | \nu_m = \nu_{m, M}] & =  \frac{M - m}{2} + 
  \frac{m \lambda^m - (2 M - m) \lambda^M + (2 M - m) \lambda^{M + m} - m \lambda^{2 M}}
  {2 p (1 - \lambda)  (\lambda^m - \lambda^M)  (1 - \lambda^M)}
\end{align*}

\end{corollary}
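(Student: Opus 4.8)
The plan is to reduce both identities to Theorem \ref{en} via the two elementary conservation laws satisfied by the counting processes. By definition $\mathcal{L}(n) + \mathcal{R}(n) = n$ for $n \le \nu_m$, and since a right step raises $\mathbf{X}$ by one and a left step lowers it by one, $\mathcal{R}(n) - \mathcal{L}(n) = \mathbf{X}_n - \mathbf{X}_0 = \mathbf{X}_n - m$. Solving this linear system at the stopping time $n = \nu_m$ gives $\mathcal{L}(\nu_m) = \tfrac12(\nu_m + m - \mathbf{X}_{\nu_m})$ and $\mathcal{R}(\nu_m) = \tfrac12(\nu_m - m + \mathbf{X}_{\nu_m})$, where $\mathbf{X}_{\nu_m} \in \{0, M\}$ according to which boundary is hit.

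On $\{\nu_m = \nu_{m,0}\}$ we have $\mathbf{X}_{\nu_m} = 0$, so $\mathcal{L}(\nu_m) = \tfrac12(\nu_m + m)$; taking conditional expectation and substituting the third formula of Theorem \ref{en}, together with the identity $p(1-\lambda) = p - q$, yields the first displayed equality at once. For the second equality, on $\{\nu_m = \nu_{m,M}\}$ we have $\mathbf{X}_{\nu_m} = M$, hence $\mathcal{R}(\nu_m) = \tfrac12(\nu_m + M - m)$ and therefore $\mathbb{E}[\mathcal{R}(\nu_m) \mid \nu_m = \nu_{m,M}] = \tfrac{M-m}{2} + \tfrac12 \mathbb{E}[\nu_m \mid \nu_m = \nu_{m,M}]$. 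The only quantity not directly supplied by Theorem \ref{en} is this conditional expectation of the exit time given exit at the \emph{right} endpoint.

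I would obtain $\mathbb{E}[\nu_m \mid \nu_m = \nu_{m,M}]$ by the reflection $\mathbf{Y}_n := M - \mathbf{X}_n$: this is again a hiker on $[0,M]$, now with the transition probabilities $p$ and $q$ interchanged (equivalently $\lambda$ replaced by $1/\lambda$), started at $M - m$, for which the events ``$\mathbf{X}$ exits at $M$'' and ``$\mathbf{Y}$ exits at $0$'' coincide, as do the exit times. Applying the third formula of Theorem \ref{en} to $\mathbf{Y}$ --- i.e.\ with $(m, M, p, \lambda)$ replaced by $(M-m, M, q, 1/\lambda)$ --- and clearing the resulting negative powers of $\lambda$ expresses $\mathbb{E}[\nu_m \mid \nu_m = \nu_{m,M}]$ as a rational function of $\lambda$; inserting it into $\tfrac{M-m}{2} + \tfrac12 \mathbb{E}[\nu_m \mid \nu_m = \nu_{m,M}]$ and simplifying gives the stated closed form. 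Alternatively one can avoid the reflection and recover the same quantity from the decomposition $\mathbb{E}[\nu_m] = \mathbb{P}[\nu_m = \nu_{m,0}]\,\mathbb{E}[\nu_m \mid \nu_m = \nu_{m,0}] + \mathbb{P}[\nu_m = \nu_{m,M}]\,\mathbb{E}[\nu_m \mid \nu_m = \nu_{m,M}]$ using the first three formulas of Theorem \ref{en}.

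Everything but the last step is immediate, so the only place that requires care --- and the one I expect to be the real bookkeeping obstacle --- is this algebraic simplification in the exit-at-$M$ case: handling the substitution $\lambda \mapsto 1/\lambda$, clearing denominators, and collecting terms to bring the answer into the compact rational form recorded in the corollary.
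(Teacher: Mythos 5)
Your derivation of the first identity is correct and coincides with the paper's: on $\{\nu_m=\nu_{m,0}\}$ one has $\mathcal{L}(\nu_m)=\tfrac12(\nu_m+m)$, and the third formula of Theorem \ref{en} finishes it. For the second identity you correctly reduce to $\mathbb{E}[\mathcal{R}(\nu_m)\mid\nu_m=\nu_{m,M}]=\tfrac{M-m}{2}+\tfrac12\,\mathbb{E}[\nu_m\mid\nu_m=\nu_{m,M}]$, and you rightly flag that Theorem \ref{en} only supplies the conditional expectation given exit at $0$. That is exactly the point where the paper's own proof is careless: its second displayed equation substitutes $\mathbb{E}[\nu_m\mid\nu_m=\nu_{m,0}]$ where $\mathbb{E}[\nu_m\mid\nu_m=\nu_{m,M}]$ belongs.

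The gap in your proposal is the final sentence: the simplification you defer does \emph{not} produce the stated closed form, because the two conditional exit times have genuinely different expectations. Carrying out your reflection (or the total-expectation decomposition) gives
\begin{equation*}
\mathbb{E}[\nu_m\mid\nu_m=\nu_{m,M}]=\frac{(M-m)-(M+m)\lambda^{m}+(M+m)\lambda^{M}-(M-m)\lambda^{M+m}}{p(1-\lambda)(1-\lambda^{m})(1-\lambda^{M})},
\end{equation*}
which is not twice the fraction displayed in the corollary. A concrete check: for $M=3$, $m=1$, $p=2/3$ one finds $\mathbb{E}[\nu_1\mid\nu_1=\nu_{1,0}]=11/7$ but $\mathbb{E}[\nu_1\mid\nu_1=\nu_{1,3}]=18/7$, so the corollary's second right-hand side equals $1+\tfrac{11}{14}=\tfrac{25}{14}<2$, whereas $\mathcal{R}(\nu_1)\geq M-m=2$ surely on $\{\nu_1=\nu_{1,3}\}$. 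The second identity as printed is therefore false, and no proof of it can close; your method, completed honestly, proves the corrected statement with $\tfrac12\,\mathbb{E}[\nu_m\mid\nu_m=\nu_{m,M}]$ as the second term. (The misprint does not propagate to the main theorem: the proof of the proposition for $\mathbb{E}[N(\xi)\vee N'(\xi)]$ uses the correct conditional expectation and recombines the two exit branches into the unconditional $\tfrac12\mathbb{E}[\nu_2]$, and Proposition \ref{rxi} only invokes the first line.)
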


\begin{proof}
  If the hiker exits $[ 1, M - 1 ]$ at 1 (resp. $M-1$),
  then $\mathcal{L} (\nu_m) =\mathcal{R} (\nu_m) + m$ 
  (resp. $\mathcal{R} (\nu_m) =\mathcal{L} (\nu_m) + M - m$). So we have

\begin{align*}
  \mathbb{E} [\mathcal{L} (\nu_m) | \nu_m = \nu_{m, 0}] & = 
  \frac{m}{2} + \frac{\mathbb{E} [\nu_m | \nu_m = \nu_{m, 0}]}{2}\\
  \mathbb{E} [\mathcal{L} (\nu_m) | \nu_m = \nu_{m, M}] & = 
  \frac{M-m}{2} + \frac{\mathbb{E} [\nu_m | \nu_m = \nu_{m, 0}]}{2}
\end{align*}
 and the result follows from Theorem \ref{en}.
\end{proof}

In particular, for $m=2$ this gives:
\begin{equation}\label{elnu20}
  \mathbb{E} [\mathcal{L} (\nu_2) | \nu_2 = \nu_{2, 0}] = 
  1 + \frac{1}{p} . \frac{1 - (M-1) \lambda^{M-2} + (M - 1) \lambda^{M} - \lambda^{2 M - 2}}
  {(1 - \lambda)  (1 - \lambda^{M-2})  (1 - \lambda^M)}
\end{equation}

\section{Expected duration of the Trail-Stubborn Mining Strategy}

\subsection{Notations and previous results.}
We set, $\alpha = \frac{p}{\tau_0}, \alpha' = \frac{q}{\tau_0}$ and \
$\lambda = \frac{q}{p} < 1$. We note $N$ and $N'$ the two
independent Poisson processes with parameters $\alpha$ and $\alpha'$
representing the number of blocks validated by the honest miners and the rogue
miner. We denote by $T_1, T_2, \ldots$ (resp. $T'_1, T'_2, \ldots$) 
the inter-block validation time for the honest miners (resp. attackers).

We use some notations from \cite{GPM2018}. In particular, for the stopping times:
\begin{equation}
  \tau = \inf \{ t \geq T_1 ;\, N (t) = N' (t) +{\bf{1}}_{T_1 < T'_1} \}
\end{equation}
and
\begin{equation}
  \tau_{{LSM}} = \tau + (T_{N (\tau) + 1} \wedge T'_{N (\tau) + 1}) \cdot {\bf{1}}_{T'_1 \leq T_1}
\end{equation}
We proved in \cite{GPM2018} that 

\begin{equation}\label{etlsm}
  \mathbb{E} [\tau] = \frac{p}{p - q} \tau_0,\, 
  \mathbb{E}[\tau_{LSM}] = \left( \frac{p}{p - q} + q \right) \tau_0,\, \text{and}\quad
  \mathbb{E} [N' (\tau)] = \alpha' \mathbb{E} [\tau] = \frac{pq}{p - q}
\end{equation}
More precisely, for $n\geq 1$, we have
\begin{equation}
  \mathbb{P} [N' (\tau) = n]  =  C_{n - 1}  (pq)^n 
\end{equation}
where $C_n = \frac{1}{n + 1}  \binom{2n}{n}$ denotes the $n$-th Catalan number, whose 
generating series is $C (x) = \frac{1-\sqrt{1-4x}}{2x}$.

At the end of an attack cycle, the revenue of a miner following the Lead
Stubborn Mining strategy is denoted by $R (\tau_{{LSM}})$.

\subsection{Description of the $A$-Trail-Stubborn mining strategy} At the beginning of an attack cycle both, 
the rogue miner and the honest miners start mining on top of the same common block. Then,
either the first block is discovered by the honest miners, and the attack cycle 
ends, or the attacker is the first one validating a block. Then, he keeps mining 
secretly until he is being caught up by the honest miners. During this period,
each time the honest miners publish a new block, 
the rogue miner broadcasts the part of his fork sharing the same height.
Once he has been caught up, there is a ``decisive competition" to decide which fork prevails. 
In this competition, the rogue miner does not withold his block. 
There are two cases depending on who the winner is.
Either the new block is mined on top of a block validated by the rogue miner 
(by himself or by a fraction $\gamma$ of the honest miners) and then the attack cycle ends immediatly.
Otherwise, the attacker has a fork which is one block behind the official blockchain. We call this event $\mathbf{\Sigma}$.
His delay is defined as the difference between the height of the official blockchain and his fork.
Then, he keeps mining until his delay exceeds a fixed threeshold $A\geq 1$, 
or ends up leading the official blockchain by one block. Then, in both cases, the cycle attack ends.
The trailing mining strategy is a repetition of these attack cycles.

\subsection{Stopping time}We denote by $\xi$ the stopping time of an attack
cycle corresponding to the Trail-Stubborn mining strategy (TSM). At the end of
an attack cycle the revenue of a rogue miner following TSM is denoted by $R(\xi)$.
Note that when $T'_1 < T_1$, there is a ``decisive round" which starts at $\tau$ and ends at $\tau_{LSM}$. 
So, $\tau_{LSM}\leq \xi$ and from $0$ to $\tau_{LSM}$ the two strategies $\tau_{LSM}$ and $\xi$ are the same.

\subsection{Blocks of the rogue miner in the official blockchain}For $t
\geqslant 0$, we denote by $Z (t)$ the number of blocks mined by a miner
following the Trail-Stubborn Mining strategy at $t$-time and present in the official blockchain. 
We have that $t \mapsto Z (t)$ is non-decreasing. Before $t\leq \tau$, the two strategies 
``Trail-Stubborn Mining Strategy'' and``Lead Stubborn Mining Strategy'' are the same. 
So, by \cite{GPM2018-2} we know that
\begin{equation}
  \mathbb{E} [Z (\tau) |N' (\tau) = n]  =  n - \frac{1 - (1 -
  \gamma)^n}{\gamma}  \label{rten}
\end{equation}

\begin{lemma}\label{xitaulsm}
The following conditions are equivalent to $\mathbf{\Sigma}$:
\begin{enumerate}[label=(\roman*)\quad]
\item $\tau_{LSM} < \xi$;
\item $R(\tau_{LSM}) < N'(\tau_{LSM}) b$;
\item $( T'_1 < T_1)\wedge (T_{N (\tau) + 1} < T'_{N (\tau) + 1})\ \wedge$ 
(the $\left(N (\tau) + 1\right)$-th honest
block is found on top of a block mined by a honest miner).
\end{enumerate}
If one of these conditions is satisfied then $N'(\tau_{LSM}) = N'(\tau)$, $N(\tau_{LSM}) = N(\tau) + 1$ and $Z(\tau_{LSM}) = Z(\tau)$.
We have that $\mathbf{\Sigma}$ is $\tau_{LSM}$-measurable and $\mathbb{P}[\mathbf{\Sigma}]=(1-\gamma) p q$.
\end{lemma}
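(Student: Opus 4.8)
The plan is to classify what happens during the \emph{decisive round}, i.e.\ on the interval $(\tau,\tau_{LSM}]$, and to read off each assertion from that classification. One first disposes of the case $T_1<T'_1$: then $\tau=T_1=\tau_{LSM}$, the attack cycle ends at once so $\xi=\tau_{LSM}$, one has $N'(\tau_{LSM})=0$, and each of (i), (ii), (iii) as well as $\mathbf{\Sigma}$ fails, so all the stated equivalences hold trivially. Hence we may assume $T'_1<T_1$. Then $n:=N'(\tau)=N(\tau)\geq 1$, at time $\tau$ the attacker's branch and the honest branch both have length $n$, and the first block produced after $\tau$ (which determines $\tau_{LSM}$) is found either (a) by the attacker, (b) by the fraction $\gamma$ of honest miners mining on the attacker's branch, or (c) by the fraction $1-\gamma$ of honest miners mining on the honest branch. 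The event $\mathbf{\Sigma}$ is, by definition, precisely case (c): the honest branch then reaches length $n+1$, becomes the unique longest --- hence official --- chain, and the attacker's branch is left one block behind.

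The next step is to match each of (i)--(iii) with case (c). Since the topmost block of the attacker's branch was just revealed by the attacker (hence is attacker-mined) while the topmost block of the honest branch is honest-mined, the third clause of (iii) selects the honest branch and the clause $T_{N(\tau)+1}<T'_{N(\tau)+1}$ rules out (a); together with $T'_1<T_1$ this is exactly (c), so (iii)$\,\Leftrightarrow\,\mathbf{\Sigma}$. For (i): we always have $\tau_{LSM}\leq\xi$, with the TSM cycle ending at $\tau_{LSM}$ in cases (a), (b) (the attacker's branch wins), whereas in case (c) the TSM miner, now one block behind, keeps mining by the very definition of the strategy, so $\tau_{LSM}<\xi$; hence (i)$\,\Leftrightarrow\,\mathbf{\Sigma}$. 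For (ii): in cases (a), (b) every block mined by the attacker lies in the official chain, so $R(\tau_{LSM})=N'(\tau_{LSM})\,b$, while in case (c) the attacker's topmost block is orphaned, so at least one of his $N'(\tau_{LSM})$ blocks is missing from the official chain and $R(\tau_{LSM})<N'(\tau_{LSM})\,b$; hence (ii)$\,\Leftrightarrow\,\mathbf{\Sigma}$.

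Assuming now $\mathbf{\Sigma}$, the unique block of the decisive round is honest, so $N'(\tau_{LSM})=N'(\tau)$ and $N(\tau_{LSM})=N(\tau)+1$; and since $Z$ is non-decreasing and the official chain only gains this one honest block on $(\tau,\tau_{LSM}]$, no attacker block is added, so $Z(\tau_{LSM})=Z(\tau)$. The event $\mathbf{\Sigma}$ is determined by the trajectory up to $\tau_{LSM}$ --- it is cut out by $\{T'_1<T_1\}$, by $\{T_{N(\tau)+1}<T'_{N(\tau)+1}\}$ and by the identity of the miner producing the decisive block, all of which are resolved by time $\tau_{LSM}$ --- so it is $\tau_{LSM}$-measurable. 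Finally $\mathbb{P}[T'_1<T_1]=\frac{\alpha'}{\alpha+\alpha'}=q$, and conditionally on $T'_1<T_1$ the strong Markov property at $\tau$ makes the decisive round a race between three independent exponential clocks, namely the attacker (rate $\alpha'$), the $\gamma$-group (rate $\gamma\alpha$) and the $(1-\gamma)$-group (rate $(1-\gamma)\alpha$), so (c) has conditional probability $\frac{(1-\gamma)\alpha}{\alpha+\alpha'}=(1-\gamma)p$; therefore $\mathbb{P}[\mathbf{\Sigma}]=q(1-\gamma)p=(1-\gamma)pq$.

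The delicate point is the block-accounting behind ``every attacker block lies in the official chain'' in cases (a), (b) and ``no attacker block is added'' in case (c): these require a careful look at which block is canonical at each height at time $\tau$ under the branch-revealing rule of LSM, and here I would invoke the LSM analysis of \cite{GPM2018-2} --- in particular formula \eqref{rten} --- rather than redo it. The rest is the elementary race-and-stopping-time argument above.
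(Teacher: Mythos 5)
Your proof is correct and follows essentially the same route as the paper's: both reduce everything to the outcome of the decisive round after $T'_1<T_1$, identify $\mathbf{\Sigma}$ with the sub-case in which a $(1-\gamma)$-honest miner extends the honest branch, and obtain $\mathbb{P}[\mathbf{\Sigma}]=(1-\gamma)pq$ from the same independence/race argument. The only cosmetic difference is that you match each of (i)--(iii) directly to that sub-case, whereas the paper runs the cyclic chain (i)$\Rightarrow$(ii)$\Rightarrow$(iii)$\Rightarrow$(i).
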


\begin{proof}
If (i) holds then $T'_1 < T_1$ otherwise $\tau_{LSM} = \tau = \xi = T_1$. 
Moreover, at $\tau_{LSM}$, at least one block mined by
the rogue miner has not been recognized by the official blockchain 
(otherwise, the cycle ends at $\tau_{LSM}$ and $\xi = \tau_{LSM}$). 
So, $R(\tau_{LSM}) < N'(\tau_{LSM}) b$.
If (ii) is true then $0<N'(\tau_{LSM})$. 
So, the miner has at least mined a block during the attack cycle. 
So, $T'_1 < T_1$ (otherwise as before $\tau_{LSM} = \tau = \xi = T_1$ and $N'(\tau_{LSM})=0$)). 
Moreover, the rogue miner has lost the ``decisive competition". Otherwise, 
we have $R(\tau_{LSM}) = N'(\tau_{LSM}) b$.
Also, by the same argument, the block found by the honest miners during this round 
cannot have been validated on top of a block mined by the rogue miner. 
So, we get (iii). Finally, if (iii) holds, then the rogue miner has lost the ``decisive competition". 
Hence, $\tau_{LSM} < \xi$ by definition of an attack cycle and so (i), (ii) and (iii) are equivalent.
Also, if one of these conditions is satisfied, then the rogue miner did not mine a block during the 
period $[\tau,\tau_{LSM}]$ whereas the honest miner has found exactly one.
So, $N'(\tau_{LSM}) = N'(\tau)$ and $N(\tau_{LSM}) = N(\tau) + 1$. 
Moreover the block found by the honest miners has been found on top of an honest block by (iii). 
So we have, $Z(\tau_{LSM}) = Z(\tau)$. By (ii), $\mathbf{\Sigma}$ is $\tau_{LSM}$-measurable. Moreover,
the condition $\{ T'_1 < T_1 \}$ occurs with probability $q$ and the two last conditions of (iii) occur with probability $(1 - \gamma) p$.
Therefore, we have $\mathbb{P}[\mathbf{\Sigma}]=(1-\gamma) p q$.
\end{proof}

\subsection{Trail-mining.}

  We consider that after a possible second phase of the attack cycle (after $\tau_{LSM}$), the
  rogue miner following ${\text{TSM}}_A$ will give up if his delay exceeds $A$ with $A \geqslant 1$. 
  Note that at the beginning of this second phase, the delay of the miner is $1$. So, ${\text{TSM}}_1 = {\text{LSM}}$.
  Note also that in order to win, it is not enough for the miner to catch-up the official blockchain. 
  He needs to lead it by $1$ block. So, his delay is in between $-1$ and $A$. Therefore, 
  he behaves as the hiker studied in Section \ref{sec_hicker} with delay $\mathbf{X}_{n} - 1$ and $M=A+1$.

\begin{proposition}\label{marko}
  We have 
  $$
  \xi = \tau_{LSM} + \sigma . {\bf 1}_{\mathbf{\Sigma}}
  $$ 
  where $\sigma$ is the stopping time defined by
  $$
  \sigma = \inf \{ t \in \mathbb{R}_+^{\ast} ; (\tilde{N}' (t) =
  \tilde{N} (t) + 2) \vee (\tilde{N} (t) = \tilde{N}' (t) + A - 1) \}
  $$ 
  with $\tilde{N} (t) = N (t + \tau_{LSM}) - N (\tau_{LSM})$ and $\tilde{N}' (t) = N'
  (t + \tau_{{LSM}}) - N' (\tau_{LSM})$.
  
  In particular, we have that $\tilde{N}$ and
  $\tilde{N}'$ are two independent Poisson processes with respective parameters $\alpha$
  and $\alpha'$, and $\sigma$ is independent with $\mathbf{\Sigma}$.
\end{proposition}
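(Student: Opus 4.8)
The plan is to split the analysis on the event $\mathbf{\Sigma}$ and to combine Lemma~\ref{xitaulsm} with the strong Markov property of Poisson processes. On the complement $\mathbf{\Sigma}^c$ I would invoke Lemma~\ref{xitaulsm}(i), which gives $\xi = \tau_{LSM}$ precisely when $\mathbf{\Sigma}$ does not occur; there the asserted identity is trivial since $\mathbf{1}_{\mathbf{\Sigma}} = 0$. So the whole content lies on $\mathbf{\Sigma}$, where, by the last assertion of Lemma~\ref{xitaulsm}, at time $\tau_{LSM}$ the rogue fork trails the official blockchain by exactly one block.

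On $\mathbf{\Sigma}$ I would describe the phase starting at $\tau_{LSM}$: the rogue miner keeps mining on his strictly shorter secret fork while the honest network mines on the official chain (no honest hashrate can be diverted onto a shorter chain), so in this phase honest blocks arrive according to the Poisson process $\tilde{N}$ with rate $\alpha$ and rogue blocks according to $\tilde{N}'$ with rate $\alpha'$. Writing $d(t)$ for the delay (official height minus rogue-fork height) at time $\tau_{LSM}+t$, each honest block raises $d$ by one and each rogue block lowers it by one, so $d(t) = 1 + \tilde{N}(t) - \tilde{N}'(t)$, and $\mathbf{X}_t := d(t)+1 = 2 + \tilde{N}(t) - \tilde{N}'(t)$ is exactly the hiker of Section~\ref{sec_hicker}, started at $m = 2$ with $M = A+1$, run on the Poisson clock. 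By the rules of ${\text{TSM}}_A$ the cycle ends the first time $d$ reaches $A$ (the miner gives up) or $-1$ (the miner leads by one), i.e. the first time $\mathbf{X}_t$ exits $(0, A+1)$; since $\mathbf{X}_t = 0 \iff \tilde{N}'(t) = \tilde{N}(t)+2$ and $\mathbf{X}_t = A+1 \iff \tilde{N}(t) = \tilde{N}'(t)+A-1$, this time is precisely $\sigma$. Hence $\xi = \tau_{LSM} + \sigma$ on $\mathbf{\Sigma}$, and combining the two cases gives $\xi = \tau_{LSM} + \sigma\cdot\mathbf{1}_{\mathbf{\Sigma}}$; almost-sure finiteness of $\sigma$, hence of $\xi$, is inherited from the hiker problem, where the embedded walk exits $[1, M-1]$ almost surely (for $A=1$ the hiker already starts on the boundary, which is the degenerate case ${\text{TSM}}_1={\text{LSM}}$).

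For the independence statements I would first check that $\tau$ is a stopping time of the natural filtration $(\mathcal{F}_t)$ of the pair $(N,N')$ and that so is $\tau_{LSM} = \tau + (T_{N(\tau)+1}\wedge T'_{N(\tau)+1})\mathbf{1}_{T'_1\le T_1}$: on $\{T_1<T'_1\}$ it equals $T_1$, and on $\{T'_1\le T_1\}\in\mathcal{F}_\tau$ it equals $\tau$ plus the first arrival of $N$ or $N'$ after $\tau$. Since $(N,N')$ is a Lévy, in particular strong Markov, process, the shifted increment processes $\tilde{N}$ and $\tilde{N}'$ are, conditionally on $\mathcal{F}_{\tau_{LSM}}$, independent Poisson processes with parameters $\alpha$ and $\alpha'$, and jointly independent of $\mathcal{F}_{\tau_{LSM}}$. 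As $\mathbf{\Sigma}$ is $\tau_{LSM}$-measurable by Lemma~\ref{xitaulsm} and $\sigma$ is a measurable functional of $(\tilde{N},\tilde{N}')$, the independence of $\sigma$ and $\mathbf{\Sigma}$ follows.

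I expect the only genuinely delicate point to be coordinating the strong Markov property at $\tau_{LSM}$ with the $\mathcal{F}_{\tau_{LSM}}$-measurability of $\mathbf{\Sigma}$ coming from Lemma~\ref{xitaulsm}; once that is set up, the rest is bookkeeping of the delay process $d(t)=1+\tilde{N}(t)-\tilde{N}'(t)$ and of the stopping rule.
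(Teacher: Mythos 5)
Your proof is correct and follows essentially the same route as the paper's: reduce to the event $\mathbf{\Sigma}$ via Lemma \ref{xitaulsm}, translate the two stopping conditions of the second phase (lead by one block, or delay reaching $A$) into the conditions $\tilde{N}'(t)=\tilde{N}(t)+2$ and $\tilde{N}(t)=\tilde{N}'(t)+A-1$ using $N'(\tau_{LSM})=N(\tau_{LSM})-1$, and obtain independence from the strong Markov property at $\tau_{LSM}$ together with the $\tau_{LSM}$-measurability of $\mathbf{\Sigma}$. You are in fact slightly more explicit than the paper on the stopping-time property of $\tau_{LSM}$ and on the bookkeeping of the delay process, which is a welcome addition but not a different argument.
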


\begin{proof}
  The stopping time of the Trail-Stubborn Mining Strategy is the same as the
  stopping time of the Lead Stubborn Mining strategy studied in \cite{GPM2018-2} 
  except when the stubborn
  miner has been first mining a block, then has been caught-up by the honest
  miners, and at has lost the final ``competition'' (when a fraction $(1 - \gamma) p$ of the honest miners finds a new
  block on top of a honest block).
  We have called $\mathbf{\Sigma}$ this event. If it occurs, then
  the stubborn miner keeps on mining until he catches up the honest miners or
  his delay becomes too big. In this case, he  needs to catch-up the
  honest miners and also lead the official blockchain by $1$. The start time
  of this possible second round is $\tau_{LSM}$ with $N'
  (\tau_{LSM}) = N (\tau_{LSM}) - 1$ and the miner will stop at
  $\tau_{LSM} + t$ with $N' (t + \tau_{LSM}) = N (t +
  \tau_{{LSM}}) + 1$ or $N (t + \tau_{{LSM}}) = N' (t + \tau_{{LSM}}) + A$. 
  The first equality is equivalent to 
  $N' (t + \tau_{{LSM}}) - N' (\tau_{{LSM}}) = N (t + \tau_{{LSM}}) - N
  (\tau_{{LSM}}) + 2$ and the second is equivalent to $N (t +
  \tau_{{LSM}}) - N (\tau_{{LSM}}) = N' (t + \tau_{{LSM}}) - N'
  (\tau_{{LSM}}) + A - 1$. 
  Moreover, by the strong Markov property $\sigma$ is independent with $\tau_{LSM}$. 
  So, by Lemma \ref{xitaulsm}, it is also independent with $\mathbf{\Sigma}$.
\end{proof}

Note that the condition $(\tilde{N}' (\sigma) = \tilde{N} (\sigma) + 2) \vee (\tilde{N} (\sigma) = \tilde{N}' (\sigma) + A - 1)$
is equivalent to $X (\sigma) \in \{ 0, A+1 \}$ with $X (t) = N (t) - N' (t) + 2$. So we have that 
the miner is a  hiker on $[ 0, M]$ as studied in section \ref{sec_hicker} starting from $\mathbf{X}_0 = 2$ with $M=A+1$.
In Appendix A we prove the following Proposition:

\begin{proposition} \label{disnu2}
We have 
  $$\mathbb{E} [\sigma] = \frac{A+1}{p - q}  \left( \frac{1 - \lambda^2}{1 -
  \lambda^{A+1}} - \frac{2}{A+1} \right) \tau_0
  $$
\end{proposition}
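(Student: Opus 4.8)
The plan is to recognize $\sigma$ as a time-changed copy of the hiker's exit time $\nu_2$ from Section \ref{sec_hicker}, and then simply read off $\mathbb{E}[\nu_2]$ from Theorem \ref{en}.

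First I would unwind the description already recorded just before the statement. With $X(t) = \tilde N(t) - \tilde N'(t) + 2$ one has $\sigma = \inf\{t>0 : X(t) \in \{0,A+1\}\}$, the exit time of $X$ from $[1,A]$. Since $\tilde N$ and $\tilde N'$ are independent Poisson processes of rates $\alpha = p/\tau_0$ and $\alpha' = q/\tau_0$ and $p+q=1$, their superposition $\tilde N + \tilde N'$ is a Poisson process of rate $1/\tau_0$, so almost surely no two jumps coincide. Listing the successive jump times $0 = S_0 < S_1 < S_2 < \cdots$ of $X$ and setting $\mathbf{X}_n = X(S_n)$, the chain $(\mathbf{X}_n)_n$ is exactly the hiker of Section \ref{sec_hicker} with $M = A+1$ started at $\mathbf{X}_0 = 2$: at each jump the value moves by $+1$ with probability $p$ (a jump of $\tilde N$) and by $-1$ with probability $q$ (a jump of $\tilde N'$), and $0$ and $A+1$ are absorbing. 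Hence $\sigma = S_{\nu_2}$, where $\nu_2 = \nu_{2,0}\wedge\nu_{2,A+1}$ is the hiker's exit time from $[1,A]$, which is finite almost surely by Theorem \ref{en}.

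Next I would pass from the number of jumps to elapsed time. Writing $\sigma = \sum_{k=1}^{\nu_2}(S_k - S_{k-1})$, the inter-jump increments $S_k - S_{k-1}$ are i.i.d. exponential of mean $\tau_0$ and are independent of the sequence of jump directions, hence of the embedded chain $(\mathbf{X}_n)_n$ and in particular of $\nu_2$. Wald's identity then yields $\mathbb{E}[\sigma] = \tau_0\,\mathbb{E}[\nu_2]$. Plugging $m = 2$ and $M = A+1$ into the first formula of Theorem \ref{en},
$$
\mathbb{E}[\nu_2] = \frac{A+1}{p-q}\left(\frac{1-\lambda^2}{1-\lambda^{A+1}} - \frac{2}{A+1}\right),
$$
and multiplying by $\tau_0$ gives the claimed expression for $\mathbb{E}[\sigma]$.

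The only mildly delicate point is the Wald step: one must be sure that the holding times of the superposed Poisson process are genuinely independent of the embedded jump chain (this is the standard uniformization description of a Poisson process) and that $\nu_2$ is integrable — but the latter is precisely the content of the first line of Theorem \ref{en}. Everything else is bookkeeping. As a sanity check, for $A=1$ the hiker starts already at the absorbing boundary $M = 2$, so $\nu_2 = 0$ and $\mathbb{E}[\sigma] = 0$, consistent with ${\text{TSM}}_1 = {\text{LSM}}$ having no second phase, and the displayed formula indeed vanishes there.
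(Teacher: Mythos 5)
Your proof is correct and matches the paper's route: the paper also reduces $\sigma$ to the hiker's exit time with $m=2$, $M=A+1$ and proves (in Appendix~\ref{app_A}) the identity $\mathbb{E}[\sigma]/\tau_0=\mathbb{E}[\nu_2]$ together with the gambler's-ruin formula for the latter. The only cosmetic difference is how that time-change identity is justified: you apply Wald's identity to the i.i.d.\ exponential holding times of the superposed Poisson process, whereas the paper obtains $\mathbb{E}[N(\sigma)]=\alpha\,\mathbb{E}[\sigma]$ and $\mathbb{E}[N'(\sigma)]=\alpha'\,\mathbb{E}[\sigma]$ from Doob's optional stopping theorem and then conditions on the exit side — two standard phrasings of the same argument.
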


\begin{proposition}\label{exiTSM}
  We have 
  $$
  \frac{\mathbb{E} [\xi]}{\tau_0} = \frac{p}{p - q} + q + (A+1) \cdot
  \frac{(1 - \gamma) pq}{p - q} \cdot \left( \frac{1 - \lambda^2}{1 -
  \lambda^{A+1}} - \frac{2}{A+1} \right)
  $$
\end{proposition}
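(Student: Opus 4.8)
The plan is to obtain the formula as a direct bookkeeping consequence of the structural decomposition in Proposition \ref{marko} together with the three quantities already computed earlier in the paper. By Proposition \ref{marko} we have the identity
$$
\xi = \tau_{LSM} + \sigma\cdot\mathbf{1}_{\mathbf{\Sigma}},
$$
in which, crucially, $\sigma$ is independent of the event $\mathbf{\Sigma}$. Taking expectations and using this independence gives
$$
\mathbb{E}[\xi] = \mathbb{E}[\tau_{LSM}] + \mathbb{E}\big[\sigma\cdot\mathbf{1}_{\mathbf{\Sigma}}\big] = \mathbb{E}[\tau_{LSM}] + \mathbb{E}[\sigma]\cdot\mathbb{P}[\mathbf{\Sigma}].
$$

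It then remains to substitute the three known inputs. First, \eqref{etlsm} supplies $\mathbb{E}[\tau_{LSM}] = \left(\frac{p}{p-q}+q\right)\tau_0$. Next, Lemma \ref{xitaulsm} gives $\mathbb{P}[\mathbf{\Sigma}] = (1-\gamma)pq$. Finally, Proposition \ref{disnu2} gives $\mathbb{E}[\sigma] = \frac{A+1}{p-q}\left(\frac{1-\lambda^2}{1-\lambda^{A+1}}-\frac{2}{A+1}\right)\tau_0$; in particular $\sigma$ is integrable, which is what legitimates the factorization $\mathbb{E}[\sigma\,\mathbf{1}_{\mathbf{\Sigma}}] = \mathbb{E}[\sigma]\,\mathbb{P}[\mathbf{\Sigma}]$. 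Plugging these into the display above and dividing by $\tau_0$ yields
$$
\frac{\mathbb{E}[\xi]}{\tau_0} = \frac{p}{p-q} + q + (A+1)\cdot\frac{(1-\gamma)pq}{p-q}\cdot\left(\frac{1-\lambda^2}{1-\lambda^{A+1}}-\frac{2}{A+1}\right),
$$
which is the claimed formula.

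I do not expect a genuine obstacle here: this proposition is essentially an assembly step, and all the real work has already been front-loaded into Proposition \ref{disnu2} (the hiker computation, i.e.\ the expected exit time of $\mathbf{X}_n$ from $[1,A]$ started at $2$, reduced to Theorem \ref{en}) and into Lemma \ref{xitaulsm} (the identification of the event $\mathbf{\Sigma}$ and the proof that $\mathbb{P}[\mathbf{\Sigma}]=(1-\gamma)pq$). The only point worth a sentence is the independence of $\sigma$ and $\mathbf{1}_{\mathbf{\Sigma}}$, which comes from the strong Markov property as recorded in Proposition \ref{marko}, together with the integrability of $\sigma$ from Proposition \ref{disnu2}; these two facts are exactly what is needed to split the mixed expectation into a product.
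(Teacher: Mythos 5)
Your proof is correct and is essentially identical to the paper's: both take expectations in the decomposition $\xi = \tau_{LSM} + \sigma\cdot\mathbf{1}_{\mathbf{\Sigma}}$ from Proposition \ref{marko}, use the independence of $\sigma$ and $\mathbf{\Sigma}$ to factor the mixed expectation, and then substitute the values from (\ref{etlsm}), Lemma \ref{xitaulsm} and Proposition \ref{disnu2}. Your explicit remarks on integrability and on why the factorization is legitimate are a slightly more careful write-up of the same argument.
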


\begin{proof}
  By Proposition \ref{marko}, we have
\begin{eqnarray*}
    \mathbb{E} [\xi] & = & \mathbb{E} [\tau_{LSM}] + \mathbb{P} [\mathbf{\Sigma}] \cdot \mathbb{E} [\sigma]
  \end{eqnarray*}
  So, we get the result using (\ref{etlsm}), Lemma \ref{xitaulsm} and Proposition \ref{disnu2}.
\end{proof}

\section{Revenue ratio of the Trail-Stubborn Mining Strategy}

\begin{proposition} We have:
  $$ 
  R (\xi) = R (\tau_{{LSM}}) \cdot {\bf{1}}_{\xi =
     \tau_{{LSM}}} + (N' (\tau) +\mathcal{L} (\nu_2)) b \cdot
     {\bf{1}}_{(\xi > \tau_{{LSM}}) \wedge (\nu_2 = \nu_{2, 0})} + Z
     (\tau) b \cdot {\bf{1}}_{(\xi > \tau_{{LSM}}) \wedge (\nu_2 =
     \nu_{2, A+1})}
  $$
\end{proposition}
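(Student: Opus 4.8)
The plan is to split the probability space according to what happens after $\tau_{LSM}$ and to compute $R(\xi)$ separately on each part. By Lemma \ref{xitaulsm} the event $\{\xi=\tau_{LSM}\}$ is precisely the complement of $\mathbf{\Sigma}$, and on $\mathbf{\Sigma}$ Proposition \ref{marko} identifies the trailing phase with the hiker of Section \ref{sec_hicker} on $[0,A+1]$ started at $\mathbf{X}_0=2$ and stopped at $\nu_2=\nu_{2,0}\wedge\nu_{2,A+1}$. Since that hiker a.s. leaves the interior $[1,A]$, either at $0$ or at $A+1$, the three events $\{\xi=\tau_{LSM}\}$, $\{\xi>\tau_{LSM}\}\cap\{\nu_2=\nu_{2,0}\}$ and $\{\xi>\tau_{LSM}\}\cap\{\nu_2=\nu_{2,A+1}\}$ partition the probability space up to a null set, so it suffices to evaluate $R(\xi)$ on each of them; on the first one $R(\xi)=R(\tau_{LSM})$ is immediate from the definition of the revenue.

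Next I would record the state at $\tau_{LSM}$ on $\mathbf{\Sigma}$ using Lemma \ref{xitaulsm}: $N'(\tau_{LSM})=N'(\tau)$, $N(\tau_{LSM})=N(\tau)+1$, $Z(\tau_{LSM})=Z(\tau)$, and the rogue miner begins the trailing phase holding a secret fork exactly one block behind the official chain. In the hiker coding of Proposition \ref{marko}, with $X(t)=\tilde{N}(t)-\tilde{N}'(t)+2$, a step of the hiker to the left is exactly a block mined by the rogue miner during $[\tau_{LSM},\xi]$ (probability $q$) and a step to the right is a block mined by the honest network (probability $p$); hence the rogue fork gains $\mathcal{L}(\nu_2)$ blocks during the trailing phase, so $N'(\xi)=N'(\tau)+\mathcal{L}(\nu_2)$. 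Moreover on $\mathbf{\Sigma}$ the quantity $Z$ does not change after $\tau_{LSM}$, because the rogue fork stays hidden and the official chain is only extended by honest blocks mined on honest blocks, hence $Z(\xi)=Z(\tau_{LSM})=Z(\tau)$.

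Then I would treat the two remaining events. If $\nu_2=\nu_{2,0}$ the hiker exits at $0$, i.e. $\tilde{N}'(\sigma)=\tilde{N}(\sigma)+2$, which together with $N'(\tau_{LSM})=N(\tau_{LSM})-1$ gives $N'(\xi)=N(\xi)+1$: the rogue fork overtakes the honest chain and becomes the official blockchain, so every block the rogue miner mined in the cycle is now in the official chain and $R(\xi)=N'(\xi)b=(N'(\tau)+\mathcal{L}(\nu_2))b$. If instead $\nu_2=\nu_{2,A+1}$ the hiker exits at $A+1$, the honest chain is $A$ blocks ahead, the rogue miner abandons his fork, and he is left with exactly the blocks of his already present in the official chain, namely $Z(\xi)=Z(\tau)$ of them, so $R(\xi)=Z(\tau)b$. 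Multiplying each value by the indicator of the event on which it holds and adding the three terms yields the stated identity.

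The step that needs the most care, and the one I would regard as the main obstacle, is the block-accounting in the winning case: one must check that when the secret fork overtakes, it really carries all $N'(\tau)+\mathcal{L}(\nu_2)$ blocks the rogue miner ever mined in the cycle into the official chain, even though during the LSM phase competing forks were created at several heights. This holds because the rogue miner only ever extends his own single chain and never orphans it, so its final, revealed version — which is what becomes official once $N'(\xi)=N(\xi)+1$ — contains all of his blocks; the companion claim on $\mathbf{\Sigma}$ that $Z$ is frozen after $\tau_{LSM}$ rests on the same observation that in the trailing phase nothing of the rogue miner's is published.
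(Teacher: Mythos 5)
Your proposal is correct and follows essentially the same route as the paper: split on the event $\mathbf{\Sigma}=\{\xi>\tau_{LSM}\}$ via Lemma \ref{xitaulsm}, identify the trailing phase with the hiker started at $2$ via Proposition \ref{marko}, and read off the reward in each of the three cases using the fact that left steps of the hiker are exactly the rogue miner's blocks. The extra care you take with the block accounting in the winning case (that the rogue's single never-orphaned chain carries all $N'(\tau)+\mathcal{L}(\nu_2)$ of his blocks into the official chain, and that $Z$ is frozen after $\tau_{LSM}$) is a point the paper's proof passes over silently, and it is handled correctly.
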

  In this Proposition $\mathcal{L} (\nu_2)$ is the number of blocks validated by the rogue
  miner during the second phase of the strategy. The event $(\xi >
  \tau_{{LSM}}) \wedge (\nu_2 = \nu_{2, 0})$ (resp. $(\xi >
  \tau_{{LSM}}) \wedge (\nu_2 = \nu_{2, A+1})$) means that the cycle is
  made of two distinct phases: in the first one the rogue miner looses the first phase of
  the attack, and in the second one he wins (resp. looses) the second phase.

\begin{proof}
  If $R (\tau_{{LSM}}) < N' (\tau_{{LSM}}) b$, then the
  miner tries to catch-up the official blockchain. He is in the position of a
  hiker starting from $\mathbf{X}_0 = 2$ and winning when $\nu_2 = \nu_{2,
  0}$. Each move to the left (towards $0$) corresponds to a new
  block mined by the stubborn miner. So, if he succeeds (case $\nu_2 = \nu_{2,
  0}$), then he earns a reward $(N' (\tau_{LSM}) +\mathcal{L} (\nu_2)) b$. If he fails (case
  $\nu_2 = \nu_{2, A+1}$) then he earns only $Z (\tau_{LSM}) b$ and the attack cycle
  ends. Otherwise, the strategy ends at $\tau_{LSM}$ and $R(\xi) = R (\tau_{LSM})$. 
  The result then follows from Lemma \ref{xitaulsm}.
\end{proof}

Now we compute the expected revenue of the $A$-Trail-Stubborn Mining Strategy in an attack cycle.

\begin{proposition} \label{rxi}
We have
  \begin{align*}
    &\frac{\mathbb{E} [R (\xi)]}{b}  = \left( \frac{p + pq - q^2}{p - q} \right) q + (1 - \gamma) pq \, \cdot \\
    &\cdot \left[ \left( 1 +
    \frac{1}{p} \cdot \frac{1 - A \lambda^{A - 1} + A \lambda^{A+1} -
    \lambda^{2 A}}{(1 - \lambda)  (1 - \lambda^{A - 1})  (1 - \lambda^{A+1})}
    \right)  \frac{\lambda^2 - \lambda^{A+1}}{1 - \lambda^{A+1}} - \frac{2 p}{\sqrt{1
    - 4 (1 - \gamma) pq} + p - q}  \frac{1 - \lambda^2}{1 - \lambda^{A+1}}
    \right]
  \end{align*}
\end{proposition}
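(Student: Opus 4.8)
The plan is to take expectations in the formula for $R(\xi)$ from the preceding Proposition and reduce every term to quantities already available: $\mathbb{E}[R(\tau_{LSM})]$ (the Lead‑Stubborn revenue), the hiker data of Theorem \ref{en}, Corollary \ref{lrs} and \eqref{elnu20}, and the conditional law \eqref{rten} of $Z(\tau)$. The guiding observation is Lemma \ref{xitaulsm}: the events $\{\xi>\tau_{LSM}\}$ and $\mathbf{\Sigma}$ coincide, and on $\mathbf{\Sigma}$ the Lead‑Stubborn miner has lost the decisive round, so his revenue at $\tau_{LSM}$ is exactly $R(\tau_{LSM})=Z(\tau_{LSM})b=Z(\tau)b$.

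First I would algebraically simplify $R(\xi)$. Writing $R(\tau_{LSM})\,{\bf 1}_{\xi=\tau_{LSM}}=R(\tau_{LSM})-Z(\tau)b\,{\bf 1}_{\mathbf{\Sigma}}$ and using that on $\mathbf{\Sigma}$ the two events $\{\nu_2=\nu_{2,0}\}$ and $\{\nu_2=\nu_{2,A+1}\}$ partition (almost surely, by \cite{F}), the $Z(\tau)b$ contributions collapse and one gets
$$R(\xi)=R(\tau_{LSM})+\bigl(N'(\tau)-Z(\tau)+\mathcal{L}(\nu_2)\bigr)\,b\,{\bf 1}_{\mathbf{\Sigma}\cap\{\nu_2=\nu_{2,0}\}}\,.$$
Here $\mathcal{L}(\nu_2)$ and ${\bf 1}_{\nu_2=\nu_{2,0}}$ depend only on the increments of $N,N'$ after $\tau_{LSM}$, so by Proposition \ref{marko} and the strong Markov property they are independent of the $\tau_{LSM}$‑measurable triple $(N'(\tau),Z(\tau),{\bf 1}_{\mathbf{\Sigma}})$. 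Taking expectations and factoring,
$$\frac{\mathbb{E}[R(\xi)]}{b}=\frac{\mathbb{E}[R(\tau_{LSM})]}{b}+\mathbb{E}\bigl[(N'(\tau)-Z(\tau)){\bf 1}_{\mathbf{\Sigma}}\bigr]\,\mathbb{P}[\nu_2=\nu_{2,0}]+\mathbb{P}[\mathbf{\Sigma}]\,\mathbb{P}[\nu_2=\nu_{2,0}]\,\mathbb{E}[\mathcal{L}(\nu_2)\mid\nu_2=\nu_{2,0}]\,.$$

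Then I would evaluate the four ingredients. From Theorem \ref{en} with $m=2$, $M=A+1$ we get $\mathbb{P}[\nu_2=\nu_{2,0}]=\frac{\lambda^2-\lambda^{A+1}}{1-\lambda^{A+1}}$; from Lemma \ref{xitaulsm}, $\mathbb{P}[\mathbf{\Sigma}]=(1-\gamma)pq$; from \eqref{elnu20} with $M=A+1$, $\mathbb{E}[\mathcal{L}(\nu_2)\mid\nu_2=\nu_{2,0}]=1+\frac1p\cdot\frac{1-A\lambda^{A-1}+A\lambda^{A+1}-\lambda^{2A}}{(1-\lambda)(1-\lambda^{A-1})(1-\lambda^{A+1})}$. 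For $\mathbb{E}[(N'(\tau)-Z(\tau)){\bf 1}_{\mathbf{\Sigma}}]$ I would use that the outcome of the decisive round is independent of $\mathcal{F}_\tau$, so ${\bf 1}_{\mathbf{\Sigma}}$ detaches a factor $(1-\gamma)p$, and then \eqref{rten} gives $\mathbb{E}[N'(\tau)-Z(\tau)]=\gamma^{-1}\,\mathbb{E}[1-(1-\gamma)^{N'(\tau)}]$, which I evaluate from $\mathbb{P}[N'(\tau)=n]=C_{n-1}(pq)^n$ via the Catalan generating function $C(x)=\frac{1-\sqrt{1-4x}}{2x}$ and the identity $(p-q)^2=1-4pq$, obtaining $\frac{2pq}{\sqrt{1-4(1-\gamma)pq}+p-q}$. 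Likewise $\mathbb{E}[R(\tau_{LSM})]/b=\frac{p+pq-q^2}{p-q}q-(1-\gamma)pq\,\frac{2p}{\sqrt{1-4(1-\gamma)pq}+p-q}$, which follows from \cite{GPM2018-2} and can be rederived by the same method from the decisive‑round description together with \eqref{rten}.

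Finally I would substitute and simplify: the two terms carrying the radical combine through $\frac{\lambda^2-\lambda^{A+1}}{1-\lambda^{A+1}}-1=-\frac{1-\lambda^2}{1-\lambda^{A+1}}$, and after pulling out the common factor $(1-\gamma)pq$ one lands exactly on the asserted expression. The only genuinely delicate points are the bookkeeping ones — identifying $R(\tau_{LSM})=Z(\tau)b$ on $\mathbf{\Sigma}$, and the two independence claims (the hiker functionals versus the $\tau_{LSM}$‑past, and the decisive‑round outcome versus $\mathcal{F}_\tau$), both resting on the strong Markov property already used in Proposition \ref{marko}; everything else is algebra dictated by the earlier lemmas.
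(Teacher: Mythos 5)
Your proof is correct and arrives at the stated formula; the difference from the paper is one of organization rather than substance. The paper computes $\mathbb{E}[R(\xi)]$ from scratch by conditioning on the exhaustive partition $E_0$, $E_n\cap F$, $E_n\cap\bar F\cap G$, $E_n\cap\bar F\cap\bar G$ (with $E_n=\{N'(\tau)=n\}$, $F=\{R(\tau_{LSM})=N'(\tau_{LSM})b\}$, $G=\{\nu_2=\nu_{2,0}\}$) and resums the resulting series; you instead isolate the increment over the Lead-Stubborn revenue, writing $R(\xi)=R(\tau_{LSM})+\bigl(N'(\tau)-Z(\tau)+\mathcal{L}(\nu_2)\bigr)b\,{\bf 1}_{\mathbf{\Sigma}\cap G}$ and importing $\mathbb{E}[R(\tau_{LSM})]$ from \cite{GPM2018-2}. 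Both routes rest on exactly the same analytic inputs --- the Catalan generating function to evaluate $\mathbb{E}[1-(1-\gamma)^{N'(\tau)}]$, the identity $1-(1-\gamma)p\,C((1-\gamma)pq)=\frac{2p\gamma}{\sqrt{1-4(1-\gamma)pq}+p-q}$, the conditioned hiker expectation \eqref{elnu20}, and \eqref{rten} --- and on the same independence facts (the decisive-round outcome is independent of $\mathcal{F}_\tau$, the hiker functionals are independent of the $\tau_{LSM}$-past), which the paper uses implicitly through Lemma \ref{xitaulsm} and Proposition \ref{marko}. Your version buys a shorter computation and makes the reduction to the $A=1$ (LSM) case transparent, since the correction term visibly vanishes when $\lambda^2=\lambda^{A+1}$; its only extra debt is the identification $R(\tau_{LSM})=Z(\tau)b$ on $\mathbf{\Sigma}$, which you rightly flag and which is precisely the fact the paper invokes when it asserts $R(\xi)=Z(\tau)b$ on $E_n\cap\bar F\cap\bar G$. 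The final recombination checks out: your three terms reassemble into the paper's expression because $\frac{1-\lambda^2}{1-\lambda^{A+1}}+\frac{\lambda^2-\lambda^{A+1}}{1-\lambda^{A+1}}=1$.
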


\begin{proof}
  Consider the events, for $n \in \mathbb{N}$, $E_n = \{ N' (\tau) = n \}$, $F = \{ R
  (\tau_{{LSM}}) = N' (\tau_{{LSM}}) b \}$ and $G = \{ \nu_2 =
  \nu_{2, 0} \}$. From \cite{GPM2018} and \cite{GPM2018-2} we have
  \begin{align*}
    \mathbb{P} [E_n] & =  p \, {\bf{1}}_{n = 0} + (pq)^n C_{n - 1}
    {\bf{1}}_{n > 0}  \label{olda}\\
    \mathbb{P} [G] & = \frac{\lambda^2 - \lambda^{A+1}}{1 - \lambda^{A+1}} 
  \end{align*}
  and for $n > 0$,
  \begin{equation*}
    \mathbb{P} [E_n \cap F]  =  \mathbb{P} [E_n]  (q + \gamma p) 
  \end{equation*}
  Note also that
  \begin{itemize}
    \item If $E_0$ occurs then $R (\xi) = 0$.
    
    \item If $E_n \cap F$ occurs (with $n > 0$) then $R (\xi) = (n + 1) b$ with
    probability $\frac{q}{q + \gamma p}$ and $R (\xi) = nb$ with probability
    $\frac{\gamma p}{q + \gamma p}$.
    
    \item If $E_n \cap \bar{F} \cap \bar{G}$ occurs then $R (\xi) = Z (\tau) b$.
    
    \item If $E_n \cap \bar{F} \cap G$ occurs then $R (\xi) = (n +\mathcal{L}
    (\nu_2)) b$.
  \end{itemize}
  So, by conditioning on $E_0, E_n \cap F, E_n \cap \bar{F} \cap G, E_n \cap
  \bar{F} \cap \bar{G}$ and using (\ref{rten}) and Corollary \ref{lrs} 
  together with $\sum_{n \geqslant 0} \mathbb{P} [E_n] = 1$ we
  have:
  \begin{align*}
    \frac{\mathbb{E} [R (\xi)]}{b}  &=  0 \cdot \mathbb{P} [E_0] + \sum_{n
    > 0} ((n + 1) q + n \gamma p) \, \mathbb{P} [E_n]\\
    & \ +  \sum_{n > 0} \left( n - \frac{1 - (1 - \gamma)^n}{\gamma} \right) 
    (1 - \gamma) p \frac{1 - \lambda^2}{1 - \lambda^{A+1}} \, \mathbb{P} [E_n]\\
    & \ +  \sum_{n > 0} \left( n + \mathbb{E} [\mathcal{L} (\nu_2) | \nu_2 = \nu_{2, 0}] \right)
    (1 - \gamma) p \frac{\lambda^2 - \lambda^{A+1}}{1 - \lambda^{A+1}} \, \mathbb{P} [E_n]\\
    & =  \mathbb{E} [N' (\tau)] + \frac{(1 - \gamma) p}{\gamma}  \frac{1 -
    \lambda^2}{1 - \lambda^{A+1}}  (1 - \gamma) pq\, C ((1 - \gamma) pq)\\
    & \ +  \left( q - \frac{(1 - \gamma) p}{\gamma}  \frac{1 - \lambda^2}{1 -
    \lambda^{A+1}} + (1 - \gamma) p \mathbb{E} [\mathcal{L} (\nu_2) | \nu_2 = \nu_{2, 0}]
    \frac{\lambda^2 - \lambda^{A+1}}{1 - \lambda^{A+1}} \right)  (1 -\mathbb{P}
    [E_0])\\
    & =  \left( \frac{p}{p - q} + q \right) q - (1 - \gamma) pq \frac{1 -
    \lambda^2}{1 - \lambda^{A+1}}  \frac{[1 - (1 - \gamma) pC ((1 - \gamma)
    pq)]}{\gamma}\\
    & \ +  (1 - \gamma) pq \mathbb{E} [\mathcal{L} (\nu_2) | \nu_2 = \nu_{2, 0}] 
    \frac{\lambda^2 - \lambda^{A+1}}{1 - \lambda^{A+1}}
  \end{align*}
  Moreover for $q > 0$, we have
  \begin{align*}
    1 - (1 - \gamma) p\, C ((1 - \gamma) pq) & =  1 - \frac{1 - \sqrt{1 - 4 (1 -
    \gamma) pq}}{2 q}\\
    & =  \frac{\sqrt{1 - 4 (1 - \gamma) pq} - (p - q)}{2 q}\\
    & =  \frac{1 - 4 pq + 4 \gamma pq - (p^2 - 2 pq + q^2)}{2 q \left[
    \sqrt{1 - 4 (1 - \gamma) pq} + p - q \right]}\\
    & =  \frac{2 p \gamma}{\sqrt{1 - 4 (1 - \gamma) pq} + p - q}
  \end{align*}
  and we get the result using (\ref{elnu20}).
\end{proof}

Proposition \ref{exiTSM} and Proposition \ref{rxi} give the revenue ratio of the strategy and the first part of Theorem \ref{thm_main}.

\section{Difficulty adjustment}\label{sec_diff_adjust}

\begin{proposition}
  We have
  $$
  \mathbb{E} [N (\xi) \vee N' (\xi)] = \frac{pq + p - q}{p - q} + \frac{(1
     - \gamma) pq}{p - q}  \left( (A p + q)  \frac{1 - \lambda^2}{1 -
     \lambda^{A+1}} - 1 \right) 
  $$
\end{proposition}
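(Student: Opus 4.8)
\noindent\emph{Proof plan.} The idea is to split $N(\xi)\vee N'(\xi)$, i.e. the number of blocks appended to the official blockchain over one attack cycle, according to whether the event $\mathbf{\Sigma}$ of Lemma \ref{xitaulsm} occurs. On $\mathbf{\Sigma}^c$ we have $\xi=\tau_{LSM}$, so $N(\xi)\vee N'(\xi)=N(\tau_{LSM})\vee N'(\tau_{LSM})$. On $\mathbf{\Sigma}$, Lemma \ref{xitaulsm} gives $N'(\tau_{LSM})=N'(\tau)$ and $N(\tau_{LSM})=N'(\tau)+1$, so there $N(\tau_{LSM})\vee N'(\tau_{LSM})=N'(\tau)+1$, and by Proposition \ref{marko} the cycle continues with an independent second phase, which is the hiker of Section \ref{sec_hicker} with $m=2$, $M=A+1$ and $X(t)=N(t)-N'(t)+2$ counted from $\tau_{LSM}$.

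The next step is the height bookkeeping of the second phase on $\mathbf{\Sigma}$. Writing $\mathcal{L}(\nu_2)$ (resp. $\mathcal{R}(\nu_2)$) for the rogue (resp. honest) blocks found in this phase, we have $\mathcal{L}(\nu_2)+\mathcal{R}(\nu_2)=\nu_2$ and $\mathcal{R}(\nu_2)-\mathcal{L}(\nu_2)=X(\nu_2)-2$. If the rogue wins ($\nu_2=\nu_{2,0}$, hence $\mathcal{L}(\nu_2)=\mathcal{R}(\nu_2)+2$) his fork becomes official, of length $N'(\tau)+\mathcal{L}(\nu_2)=N'(\tau)+\mathcal{R}(\nu_2)+2$; if he loses ($\nu_2=\nu_{2,A+1}$, hence $\mathcal{R}(\nu_2)=\mathcal{L}(\nu_2)+A-1\geq\mathcal{L}(\nu_2)$) the honest chain stays official, of length $N(\tau_{LSM})+\mathcal{R}(\nu_2)=N'(\tau)+1+\mathcal{R}(\nu_2)$, which dominates $N'(\tau)+\mathcal{L}(\nu_2)$. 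In both cases
$$
N(\xi)\vee N'(\xi)=N'(\tau)+1+\mathcal{R}(\nu_2)+{\bf 1}_{\nu_2=\nu_{2,0}}\qquad\text{on }\mathbf{\Sigma}.
$$
Since the restriction of $N(\tau_{LSM})\vee N'(\tau_{LSM})$ to $\mathbf{\Sigma}$ is exactly $N'(\tau)+1$, the term $N'(\tau)+1$ telescopes against the $\mathbf{\Sigma}^c$ part and
$$
\mathbb{E}[N(\xi)\vee N'(\xi)]=\mathbb{E}[N(\tau_{LSM})\vee N'(\tau_{LSM})]+\mathbb{E}\big[(\mathcal{R}(\nu_2)+{\bf 1}_{\nu_2=\nu_{2,0}})\,{\bf 1}_{\mathbf{\Sigma}}\big].
$$

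It remains to evaluate the two terms. For the first, conditioning on $\{N'(\tau)=0\}$ (where it equals $1$) and on $\{N'(\tau)=n\}$ for $n\geq1$ (where, by the same case analysis, $N(\tau_{LSM})\vee N'(\tau_{LSM})=n+1$), and using $\mathbb{P}[N'(\tau)\geq1]=q$ together with $\mathbb{E}[N'(\tau)]=\frac{pq}{p-q}$ from $(\ref{etlsm})$, one obtains $\mathbb{E}[N(\tau_{LSM})\vee N'(\tau_{LSM})]=p+q+\frac{pq}{p-q}=\frac{pq+p-q}{p-q}$ (this is the $\text{LSM}$ value, also in \cite{GPM2018-2}). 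For the second, by Proposition \ref{marko} the second phase is independent of $\mathbf{\Sigma}$ and $\mathbb{P}[\mathbf{\Sigma}]=(1-\gamma)pq$ by Lemma \ref{xitaulsm}, so it equals $(1-\gamma)pq\big(\mathbb{E}[\mathcal{R}(\nu_2)]+\mathbb{P}[\nu_2=\nu_{2,0}]\big)$. Using $\mathcal{R}(\nu_2)=\tfrac12(\nu_2+X(\nu_2)-2)$, Theorem \ref{en} with $m=2$, $M=A+1$ for $\mathbb{E}[\nu_2]$ and for $\mathbb{P}[\nu_2=\nu_{2,0}]=\frac{\lambda^2-\lambda^{A+1}}{1-\lambda^{A+1}}$, and $\mathbb{E}[X(\nu_2)]=(A+1)\,\mathbb{P}[\nu_2=\nu_{2,A+1}]=(A+1)\frac{1-\lambda^2}{1-\lambda^{A+1}}$, a short computation using $p+q=1$ (so that $\frac{1}{p-q}+1=\frac{2p}{p-q}$) gives $\mathbb{E}[\mathcal{R}(\nu_2)]+\mathbb{P}[\nu_2=\nu_{2,0}]=\frac{1}{p-q}\big((Ap+q)\frac{1-\lambda^2}{1-\lambda^{A+1}}-1\big)$; substituting yields the claimed formula. (The quantities $\mathbb{E}[\mathcal{R}(\nu_2)]$ and $\mathbb{P}[\nu_2=\nu_{2,0}]$ can alternatively be assembled from Corollary \ref{lrs} and Theorem \ref{en}.)

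The one genuinely delicate point is the height bookkeeping on $\mathbf{\Sigma}$ in the second paragraph: deciding which fork is official at the end of the second phase and computing its length in the win and lose cases, and verifying that the $N'(\tau)+1$ contribution cancels exactly against the $\mathbf{\Sigma}$-part of the $\text{LSM}$ term. Once that reduction is in place, the remainder is the routine gambler's‑ruin algebra of Section \ref{sec_hicker}.
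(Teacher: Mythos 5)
Your proof is correct, and it follows the same basic decomposition as the paper's: split over whether the second phase occurs, and reduce everything to the exit statistics of the hiker started at $m=2$ on $[0,A+1]$. The bookkeeping on $\mathbf{\Sigma}$ is right: in the win case the official chain has length $N'(\tau)+\mathcal{L}(\nu_2)=N'(\tau)+\mathcal{R}(\nu_2)+2$ and in the lose case $N(\tau)+1+\mathcal{R}(\nu_2)$, which is exactly the case list ($n+\mathcal{L}(\nu_2)$ on $E_n\cap\bar F\cap G$, $n+1+\mathcal{R}(\nu_2)$ on $E_n\cap\bar F\cap\bar G$) that the paper conditions on.

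The one genuine difference is how you evaluate the second-phase contribution. The paper invokes Corollary \ref{lrs}, i.e.\ the conditional expectations $\mathbb{E}[\mathcal{L}(\nu_2)\mid\nu_2=\nu_{2,0}]$ and $\mathbb{E}[\mathcal{R}(\nu_2)\mid\nu_2=\nu_{2,A+1}]$, and only afterwards do the two conditional terms recombine into $\tfrac12\mathbb{E}[\nu_2]$. You instead package the win/lose cases into the single identity $N(\xi)\vee N'(\xi)=N'(\tau)+1+\mathcal{R}(\nu_2)+{\bf 1}_{\nu_2=\nu_{2,0}}$ on $\mathbf{\Sigma}$, telescope the $N'(\tau)+1$ against the LSM term, and compute the \emph{unconditional} $\mathbb{E}[\mathcal{R}(\nu_2)]=\tfrac12(\mathbb{E}[\nu_2]+\mathbb{E}[X(\nu_2)]-2)$ from Theorem \ref{en} and the exit probabilities alone. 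This makes explicit that, for this particular proposition (unlike for Proposition \ref{rxi}), Stern's conditional-duration formula is not actually needed; your route is marginally more economical and sidesteps Corollary \ref{lrs} entirely. The independence step (second-phase walk independent of $\mathbf{\Sigma}$, via Proposition \ref{marko} and the strong Markov property) and the evaluation of $\mathbb{E}[N(\tau_{LSM})\vee N'(\tau_{LSM})]=\frac{pq+p-q}{p-q}$ are both handled correctly, and your final algebra reproduces $\frac{1}{p-q}\bigl((Ap+q)\frac{1-\lambda^2}{1-\lambda^{A+1}}-1\bigr)$ exactly.
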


\begin{proof}
  We keep the same notations as in the proof of Proposition \ref{rxi}. Note
  that
  \begin{itemize}
    \item If $E_0$ occurs, then $N (\xi) \vee N' (\xi) = 1$.
    
    \item If $E_n \cap F$ occurs ($n > 0$), then $N (\xi) \vee N' (\xi) = n + 1$.

    \item If $E_n \cap \bar{F} \cap G$ occurs ($n > 0$), then $N (\xi) \vee N'
    (\xi) = n +\mathcal{L} (\nu_2)$.
    
    \item If $E_n \cap \bar{F} \cap \bar{G}$ occurs ($n > 0$), then $N (\xi) \vee N' (\xi) = N (\xi) = n+1 +\mathcal{R}(\nu_2)$.
  \end{itemize}
  So, by conditioning as before and using Corollary \ref{lrs}, we get
  \begin{align*}
    \mathbb{E} [N (\xi) \vee N' (\xi)] & =  1 \cdot \mathbb{P} [E_0] +
    \sum_{n > 0}  (n + 1) (q + \gamma p) \, \mathbb{P} [E_n]\\
    & \ +  \sum_{n > 0}  \left( n + 1 + \frac{1}{2} \mathbb{E}[\nu_2 | \nu_2
    = \nu_{2, 0}] \right)  (1 - \gamma) p \, \mathbb{P} [E_n] \, \mathbb{P} [\nu_2
    = \nu_{2, 0}]\\
    & \ +  \sum_{n > 0}  \left( n + 1 + \frac{A+1}{2} - 1 + \frac{1}{2}
    \mathbb{E} [\nu_2 | \nu_2 = \nu_{2, A+1}] \right)  (1 - \gamma)
    p \, \mathbb{P} [E_n] \, \mathbb{P} [\nu_2 = \nu_{2, A+1}]\\
    & =  \mathbb{P} [E_0] + \sum_{n > 0}  (n + 1) \mathbb{P} [E_n] +
    \sum_{n > 0}  \left( \frac{A+1}{2} - 1 \right)  (1 - \gamma) p \, \mathbb{P}
    [E_n] \, \mathbb{P} [\nu_2 = \nu_{2, A+1}]\\
    & \ +  \sum_{n > 0}  (1 - \gamma) p\, \mathbb{P} [E_n]  \, \frac{\mathbb{E}
    [\nu_2]}{2}\\
    & =  \mathbb{E} [N' (\tau)] + 1 + (1 - \gamma) pq \left( \left(
    \frac{A+1}{2} - 1 \right) \mathbb{P} [\nu_2 = \nu_{2, A+1}] +
    \frac{\mathbb{E} [\nu_2]}{2} \right)\\
    & =  \frac{pq + p - q}{p - q} + \frac{(1 - \gamma) pq}{p - q}  ((A
    p + q) \, \mathbb{P} [\nu_2 = \nu_{2, A+1}] - 1)
  \end{align*}
\end{proof}

\begin{theorem}
  The parameter $\delta$ updating the difficulty of the $A$-trail stubborn mining strategy is given by
  \begin{equation*}
    \delta = \frac{\frac{p + pq - q^2}{p - q} + (A+1) \cdot \frac{(1 - \gamma)
    pq}{p - q} \cdot \left( \frac{1 - \lambda^2}{1 - \lambda^{A+1}} - \frac{2}{A+1}
    \right)}{\frac{pq + p - q}{p - q} + \frac{(1 - \gamma) pq}{p - q}  ((A p + q) \frac{1 - \lambda^2}{1 - \lambda^{A+1}} - 1)}
  \end{equation*}
\end{theorem}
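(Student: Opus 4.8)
The plan is to use the definition of the difficulty update parameter from \cite{GPM2018}: $\delta$ is the quotient of the number of blocks the honest network would have validated during one attack cycle if it had been mining normally, namely $\mathbb{E}[\xi]/\tau_0$, by the expected number of blocks actually appended to the official blockchain during the cycle, namely $\mathbb{E}[N(\xi)\vee N'(\xi)]$. Hence
$$
\delta = \frac{\mathbb{E}[\xi]}{\tau_0\,\mathbb{E}[N(\xi)\vee N'(\xi)]}\,,
$$
and the theorem is obtained simply by substituting the two closed forms already computed.

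First I would rewrite the numerator. By Proposition \ref{exiTSM},
$$
\frac{\mathbb{E}[\xi]}{\tau_0} = \frac{p}{p-q} + q + (A+1)\,\frac{(1-\gamma)pq}{p-q}\left(\frac{1-\lambda^2}{1-\lambda^{A+1}} - \frac{2}{A+1}\right),
$$
and since $\frac{p}{p-q}+q = \frac{p+q(p-q)}{p-q} = \frac{p+pq-q^2}{p-q}$, this is exactly the numerator appearing in the statement. Then I would rewrite the denominator: by the Proposition immediately preceding this theorem,
$$
\mathbb{E}[N(\xi)\vee N'(\xi)] = \frac{pq+p-q}{p-q} + \frac{(1-\gamma)pq}{p-q}\left((Ap+q)\,\frac{1-\lambda^2}{1-\lambda^{A+1}} - 1\right),
$$
where I have used $\mathbb{P}[\nu_2=\nu_{2,A+1}] = 1 - \mathbb{P}[\nu_2=\nu_{2,0}] = 1 - \frac{\lambda^2-\lambda^{A+1}}{1-\lambda^{A+1}} = \frac{1-\lambda^2}{1-\lambda^{A+1}}$, which is Theorem \ref{en} with $m=2$ and $M=A+1$. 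This is exactly the denominator in the statement, so forming the quotient $\delta = (\mathbb{E}[\xi]/\tau_0)\big/\mathbb{E}[N(\xi)\vee N'(\xi)]$ gives the claimed formula.

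There is essentially no obstacle beyond this bookkeeping: the two non-trivial inputs (Proposition \ref{exiTSM} and the value of $\mathbb{E}[N(\xi)\vee N'(\xi)]$) are already established, so what remains is a substitution plus the two elementary simplifications above. The only point that deserves a word of care is the identification of the growth of the official blockchain over an attack cycle with $N(\xi)\vee N'(\xi)$, i.e. that the prevailing chain at time $\xi$ is always the longer of the honest and the rogue chains; this is a structural feature of the ${\text{TSM}}_A$ protocol already implicit in the case-by-case analysis ($E_0$; $E_n\cap F$; $E_n\cap\bar F\cap G$; $E_n\cap\bar F\cap\bar G$) used to compute $\mathbb{E}[N(\xi)\vee N'(\xi)]$, so no new argument is needed, and I would conclude after a final check of the algebra.
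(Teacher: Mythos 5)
Your proposal is correct and follows exactly the paper's route: the proof there consists of the single line $\delta = \frac{\mathbb{E}[\xi]}{\mathbb{E}[N(\xi)\vee N'(\xi)]}\cdot\frac{1}{\tau_0}$ (quoted from the earlier profitability paper), after which one substitutes Proposition \ref{exiTSM} for the numerator and the immediately preceding proposition for the denominator. Your two auxiliary simplifications ($\frac{p}{p-q}+q=\frac{p+pq-q^2}{p-q}$ and $\mathbb{P}[\nu_2=\nu_{2,A+1}]=\frac{1-\lambda^2}{1-\lambda^{A+1}}$) are both correct and are the same bookkeeping implicit in the paper.
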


\begin{proof}
  From  \cite{GPM2018} we have 
  that $\delta = \frac{\mathbb{E} [\xi]}{\mathbb{E} [N (\xi) \vee N' (\xi)]}
  \cdot \frac{1}{\tau_0}$.
\end{proof}

\subsection{Observations}
We denote by ${\tilde q}_{A}$ the long-term apparent hashrate 
of the $A$-Trail-Stubborn-mining strategy.
As we have already observed, the lead-stubborn mining strategy LSM 
is a particular case of the $A$-Trail-Stubborn-mining strategy with $A=1$
(in tis case, there is no possible second phase of the attack 
after $\tau_{LSM}$). We note that Theorem \ref{thm_main} yields one of the results of \cite{GPM2018-2}: if we choose $A=1$ 
in Theorem \ref{thm_main}, we get Theorem 1 of \cite{GPM2018-2}.
In Figure 1 below, we compare ${\tilde q}_{{\text{LSM}}}$ 
(the long-term apparent hashrate of the strategy LSM) with ${\text{Max}}\{ {\tilde q}_{A}\, ;\, A\geq 2\}$. 
Depending on $(q,\gamma)$, this shows when a second phase 
of the attack increases the efficiency of the strategy LSM. 
In general, when $\gamma$ is small, TSM is an amelioration of LSM.

\begin{figure}[h]\label{lsmtsm-fig}
  \resizebox{400pt}{300pt}{\includegraphics{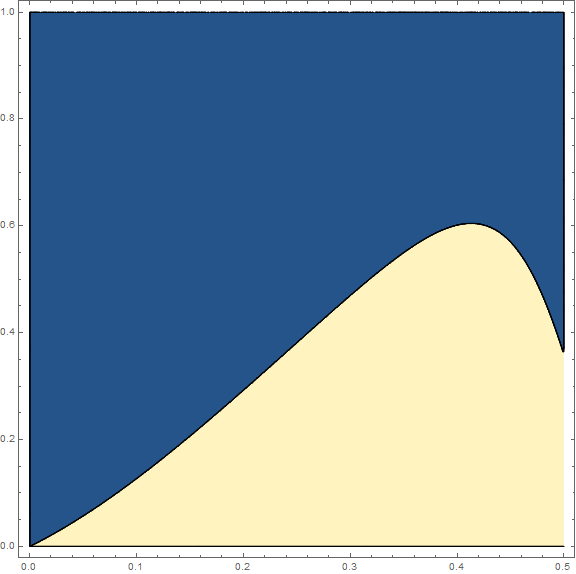}}
  \caption{LSM vs $A$-Trail-Stubborn Mining strategy for $A \geq 2$}
\end{figure}

For $\gamma$ greater than $20\%$, TSM with $A=2$ dominates 
all other trailing strategies whatever $q$ is. See Figure 2.

Also, if $\gamma = 0$, then
  \begin{equation*}
    \widetilde{q_{}}_{{TSM}}  =  \frac{\left( \frac{p + pq - q^2}{p -
    q} \right) q + pq \left[ \left( 1 + \frac{1}{p} \cdot \frac{1 - A
    \lambda^{A - 1} + A \lambda^{A+1} - \lambda^{2 A}}{(1 - \lambda)  (1
    - \lambda^{A - 1})  (1 - \lambda^{A+1})} \right)  \frac{\lambda^2 -
    \lambda^{A+1}}{1 - \lambda^{A+1}} - \frac{p}{p - q}  \frac{1 - \lambda^2}{1 -
    \lambda^{A+1}} \right]}{1 + \frac{pq}{p - q}  \left( (A p + q)  \frac{1
    - \lambda^2}{1 - \lambda^{A+1}} \right)}
  \end{equation*}
  On the other hand, when $\gamma = 0$, the apparent hashrate after a
  difficulty adjustment for the selfish mining strategy is
  $\widetilde{q_{}}_{{SM}} = \frac{pq^2 + (p - q)  (q + pq^2 - p^2
  q)}{p^2 q + p - q}$. It turns out that for $\gamma = 0$,
  \[ \underset{q \rightarrow \frac{1}{2}}{\lim}  \widetilde{q_{}}_{{TSM}}
     = 1 - \frac{1}{A+1} < 1 = \underset{q \rightarrow \frac{1}{2}}{\lim} 
     \widetilde{q_{}}_{{SM}} \]
  Hence, when $q \rightarrow \frac{1}{2}$ and $\gamma \ll 1$, SM dominates all TSM strategies.

\begin{figure}[h]
  \resizebox{400pt}{300pt}{\includegraphics{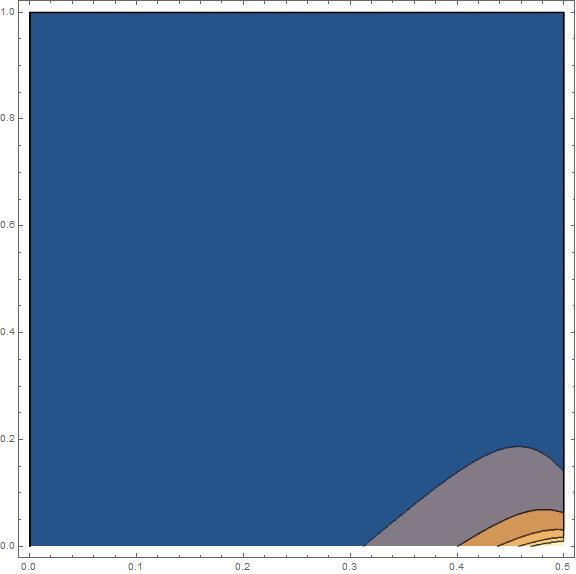}}
  \caption{$A$-Trail-Stubborn Mining strategy for $A = 2, 3, 4, 5, 6, 7$}
\end{figure}

\pagebreak

\section{Mixed strategies.}

\subsection{Weight of a mining strategy} We consider a miner mining according
to a strategy $\tau$.

\begin{definition}
  The weight of a mining strategy is the average number of official blocks
  mined during an attack cycle. It is denoted by the greek letter $\mu$.
\end{definition}

Note that if the strategy leads to a difficulty adjustment $D$, then we have:
$\mu = \frac{\mathbb{E} [\tau]}{\tau_0 D}$.

\subsection{Apparent hashrate of a mixed strategy}We consider now a miner
implementing a mixed strategy. He starts mining according to strategy 1, 
then at the end of an attack cycle, he decides to follow another strategy 2, and so on 
until he comes back to
strategy 1 after implementing $n$ of different strategies. 
Thus the attack cycle is a given pattern of attack cycles of different
strategies.

We denote by $\Gamma_1, \Gamma_2, \ldots,  R_1, R_2,\ldots \tau_{1},\tau_2,\ldots   
\tilde{\Gamma}_1, \tilde \Gamma_2, \ldots D_1, D_2\ldots \mu_1,\mu_2,\ldots$
the revenue ratio, revenue, duration time, long-term apparent hashrate, difficulty
adjustment and weight over an attack cycle of strategy $1, 2, \ldots$. 
We denote by $\Gamma, R, \tau, \tilde{\Gamma}, D$ and $\mu$, the revenue
ratio, revenue, duration time, long-term apparent hashrate, difficulty
adjustment and weight after an attack cycle of the mixed strategy.

\begin{theorem} \label{mixb}We have that $(D, \tilde{\Gamma})$ is barycenter of $(D_1,
  \tilde{\Gamma}_1), (D_2, \tilde{\Gamma}_2), \ldots $ weighted by $\mu_1, \mu_2,\ldots$.
\end{theorem}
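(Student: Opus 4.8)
The plan is to set up the mixed strategy as a single long attack cycle which is the concatenation of one cycle of each of the strategies $1, 2, \ldots, n$, and then compute each of the six quantities $R$, $\tau$, $D$, $\mu$, $\Gamma$, $\tilde\Gamma$ for the mixed cycle as an additive (or weighted-additive) combination of the corresponding quantities for the constituent strategies. Concretely, over the mixed cycle the revenue and duration are simply additive, $\mathbb{E}[R] = \sum_i \mathbb{E}[R_i]$ and $\mathbb{E}[\tau] = \sum_i \mathbb{E}[\tau_i]$, because the cycles are run consecutively and independently (the end of one cycle is a regeneration point). The weight is likewise additive: the average number of official blocks mined over the mixed cycle is $\mu = \sum_i \mu_i$, since official blocks produced in disjoint consecutive cycles just add up. The first thing I would do is record these three additivity facts carefully, noting that $\mathbb{E}[N(\xi)\vee N'(\xi)]$ for the mixed cycle is the sum of the corresponding per-strategy quantities.

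Next I would use the difficulty-adjustment identity from \cite{GPM2018}, already invoked in the excerpt in the form $D = \frac{\mathbb{E}[\tau]}{\tau_0\, \mathbb{E}[N(\xi)\vee N'(\xi)]}$, equivalently $\mu = \frac{\mathbb{E}[\tau]}{\tau_0 D}$. Applying this to each strategy gives $\mathbb{E}[\tau_i] = \tau_0 D_i \mu_i$, and applying it to the mixed strategy gives $\mathbb{E}[\tau] = \tau_0 D \mu$. Combining with additivity of $\mathbb{E}[\tau]$ and of $\mu$ yields
\[
D \sum_i \mu_i \;=\; D\mu \;=\; \frac{\mathbb{E}[\tau]}{\tau_0} \;=\; \sum_i \frac{\mathbb{E}[\tau_i]}{\tau_0} \;=\; \sum_i D_i \mu_i,
\]
so $D = \frac{\sum_i D_i \mu_i}{\sum_i \mu_i}$, i.e. $D$ is the $\mu_i$-weighted barycenter of the $D_i$. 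That handles the first coordinate.

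For the second coordinate I would argue similarly with the apparent hashrate. Recall $\tilde q = \tilde\Gamma\tau_0/b$ and that the long-term apparent hashrate after difficulty adjustment is the post-adjustment revenue ratio rescaled by the new block-production rate; since after adjustment the inter-block time becomes $D\tau_0$, one has $\tilde\Gamma = \frac{\mathbb{E}[R]}{\tau_0 D\,\mu}\cdot\frac{1}{\text{(normalization)}}$ — more cleanly, the long-term apparent hashrate equals the expected revenue per official block divided by $b$, namely $\tilde q = \frac{\mathbb{E}[R]}{b\,\mu}$ and $\tilde\Gamma = \frac{\mathbb{E}[R]}{\tau_0 D\mu}$. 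Then
\[
\tilde\Gamma\, D \sum_i \mu_i \;=\; \frac{\mathbb{E}[R]}{\tau_0} \;=\; \sum_i \frac{\mathbb{E}[R_i]}{\tau_0} \;=\; \sum_i \tilde\Gamma_i D_i \mu_i,
\]
and dividing by $D\sum_i\mu_i = \sum_i D_i\mu_i$ gives $\tilde\Gamma = \frac{\sum_i (D_i\mu_i)\tilde\Gamma_i}{\sum_i D_i\mu_i}$, so the pair $(D,\tilde\Gamma)$ is indeed the barycenter of the pairs $(D_i,\tilde\Gamma_i)$ with weights proportional to $\mu_i$ (the weight $D_i\mu_i$ appearing for $\tilde\Gamma$ being consistent because $D$ itself is the $\mu_i$-barycenter of the $D_i$, which is exactly the statement that $(D,\tilde\Gamma)$, not $D$ and $\tilde\Gamma$ separately, is the barycenter).

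The main obstacle I expect is bookkeeping rather than depth: pinning down the precise relation between $\tilde\Gamma$ (or $\tilde q$), the revenue $\mathbb{E}[R]$, the weight $\mu$, and the difficulty $D$ — i.e. getting the normalization in "apparent hashrate = expected revenue per official block" exactly right, including the factor of $b$ and the role of $\tau_0$ — so that the weighted averages land on the *same* weights $\mu_i$ for $D$ and the combined weights $D_i\mu_i$ for $\tilde\Gamma$, making the joint pair a genuine barycenter. Once the per-strategy identities $\mathbb{E}[\tau_i]=\tau_0 D_i\mu_i$ and $\mathbb{E}[R_i] = b\mu_i\tilde q_i = \tau_0 D_i\mu_i\tilde\Gamma_i$ are in hand, the conclusion is immediate from additivity of $\mathbb{E}[R]$, $\mathbb{E}[\tau]$ and $\mu$ over the concatenated cycle.
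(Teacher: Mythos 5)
Your treatment of the first coordinate is exactly the paper's: additivity of $\mathbb{E}[\tau]$ and of $\mu$ over the concatenated cycle, plus the identity $\mu_i=\mathbb{E}[\tau_i]/(\tau_0 D_i)$, gives $D=\sum_i\mu_iD_i/\sum_i\mu_i$. That part is fine.

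The second coordinate contains a genuine error, and it is precisely at the spot you flagged as ``bookkeeping'': the normalization of $\tilde\Gamma$. The long-term (post-adjustment) revenue ratio is $\tilde\Gamma=\Gamma D$, because after the difficulty re-targets the cycle durations are divided by $D$; hence
$$
\tilde\Gamma=\frac{\mathbb{E}[R]}{\mathbb{E}[\tau]}\,D=\frac{\mathbb{E}[R]}{\tau_0\,\mu},
\qquad\text{equivalently}\qquad \mathbb{E}[R_i]=\tau_0\,\mu_i\,\tilde\Gamma_i ,
$$
with no factor of $D$. This is consistent with your own (correct) statement $\tilde q=\mathbb{E}[R]/(b\mu)$ together with $\tilde q=\tilde\Gamma\tau_0/b$. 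But in the displayed computation you instead use $\tilde\Gamma=\mathbb{E}[R]/(\tau_0 D\mu)$, i.e.\ $\mathbb{E}[R_i]=\tau_0 D_i\mu_i\tilde\Gamma_i$ --- that is the formula for $\Gamma$, not $\tilde\Gamma$ (your two stated normalizations differ by the factor $D$ and cannot both hold). The consequence is that you land on $\tilde\Gamma=\sum_i D_i\mu_i\tilde\Gamma_i/\sum_i D_i\mu_i$, an average with weights $D_i\mu_i$, whereas the theorem asserts weights $\mu_i$. These are different points unless all the $D_i$ coincide, and your closing remark that the mismatch is ``consistent because $D$ is the $\mu_i$-barycenter of the $D_i$'' does not repair this: a barycenter of the planar points $(D_i,\tilde\Gamma_i)$ requires the \emph{same} weights on both coordinates, and a pair whose first coordinate is the $\mu_i$-average and whose second is the $D_i\mu_i$-average is in general not a barycenter of the $(D_i,\tilde\Gamma_i)$ for any single system of weights. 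With the correct normalization the argument closes immediately and as in the paper: $\tilde\Gamma\mu=\mathbb{E}[R]/\tau_0=\sum_i\mathbb{E}[R_i]/\tau_0=\sum_i\mu_i\tilde\Gamma_i$.
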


\begin{proof}
  The number
  $\mu$ of official blocks mined after a whole attack cycle of the mixed
  strategy is $\mu = \mu_1 + \mu_2+\ldots +\mu_n$. Moreover, $\mathbb{E} [\tau]
  =\mathbb{E} [\tau_1] +\mathbb{E} [\tau_2]+\ldots +\mathbb{E} [\tau_n]$. Therefore,
  \begin{equation*}
    D  =\sum_{i=0}^n  \frac{\frac{\mathbb{E} [\tau_i]}{\tau_0 D_i}}{\mu} D_i = \sum_{i=0}^n  \frac{\mu_i}{\mu} D_i
  \end{equation*}
  Similarly, we have
  \begin{align*}
    \tilde{\Gamma} & =  \Gamma D =  \frac{\mathbb{E} [R]}{\mathbb{E} [\tau]} \, D 
    = \frac{\sum_{i=1}^n \mathbb{E} [R_i]}{\mathbb{E} [\tau]} D
    = \frac{1}{\mu}\sum_{i=1}^n \mathbb{E} [R_i]\\
    & = \frac{1}{\mu}\sum_{i=1}^n \Gamma_i\mathbb{E} [\tau_i]
    = \frac{1}{\mu}\sum_{i=1}^n \tilde{\Gamma}_i  \frac{\mathbb{E} [\tau_i]}{D_i}
    = \frac{1}{\mu}\sum_{i=1}^n \mu_i  \tilde{\Gamma}_i
  \end{align*}
  
\end{proof}

\begin{corollary}
  We have $\tilde{\Gamma} \leq \max (\tilde{\Gamma}_1,
  \tilde{\Gamma}_2, \ldots ,\Gamma_n)$ with equality if and only if the strategy is not mixed.
\end{corollary}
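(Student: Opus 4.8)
The plan is to read this off directly from Theorem~\ref{mixb}, which expresses $\tilde\Gamma$ as the barycenter $\tilde\Gamma=\frac1\mu\sum_{i=1}^n\mu_i\tilde\Gamma_i$ with $\mu=\sum_{i=1}^n\mu_i$; equivalently $\tilde\Gamma=\sum_{i=1}^n w_i\tilde\Gamma_i$ with $w_i=\mu_i/\mu$ and $\sum_i w_i=1$. First I would note that every weight is strictly positive: since $\mu_i=\mathbb{E}[\tau_i]/(\tau_0 D_i)$ with $\mathbb{E}[\tau_i]>0$, $\tau_0>0$ and $D_i>0$, we have $\mu_i>0$, hence $w_i\in(0,1]$.

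Then, writing $\tilde\Gamma_{\max}=\max_{1\le i\le n}\tilde\Gamma_i$, the asserted inequality is the one-line estimate
\[
\tilde\Gamma=\sum_{i=1}^n w_i\tilde\Gamma_i\ \le\ \Big(\sum_{i=1}^n w_i\Big)\tilde\Gamma_{\max}=\tilde\Gamma_{\max}.
\]
For the equality clause I would rewrite the gap as $\tilde\Gamma_{\max}-\tilde\Gamma=\sum_{i=1}^n w_i(\tilde\Gamma_{\max}-\tilde\Gamma_i)$, a sum of nonnegative terms with strictly positive coefficients; it vanishes if and only if $\tilde\Gamma_i=\tilde\Gamma_{\max}$ for all $i$, i.e. all components already realize the maximal long-term apparent hashrate. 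In that case the pattern gains nothing over repeating the single best component --- which is what ``the strategy is not mixed'' means --- while conversely a non-mixed strategy has $n=1$ and trivially attains the maximum.

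The only point needing care is the interpretation of the equality clause: strictly, equality only forces all the $\tilde\Gamma_i$ to coincide, so ``equality $\iff$ not mixed'' is to be read with the convention that a genuinely mixed strategy combines components of different efficiency (which is the situation for the strategies compared in this paper). Beyond fixing this convention, the proof is nothing more than the elementary fact that a convex combination with strictly positive coefficients is at most the maximum of the combined values, with equality exactly when those values are all equal.
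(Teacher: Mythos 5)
Your proof is correct and is exactly the argument the paper intends: the corollary is stated without a separate proof as an immediate consequence of Theorem~\ref{mixb}, and your convex-combination estimate with strictly positive weights $\mu_i/\mu$ is that consequence made explicit. Your caveat about the equality clause is also well taken --- equality literally only forces all the $\tilde{\Gamma}_i$ to coincide, so the paper's ``if and only if the strategy is not mixed'' must be read with the convention you identify (genuinely mixed strategies combine components of different apparent hashrate).
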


Therefore, there is no advantage in implementing mixed strategies.

\section{Comparison with other strategies}

We compare Trailing-Stubborn Mining strategies with $A = 2, 3, 4$ and other strategies 
HM, SM, LSM and EFSM studied in \cite{GPM2018-2}. We observe 
that LSM is the dominant strategy in a very thin region between SM, EFSM and TSM2. 
Below to the right (but for
$\gamma$ not too small), the dominant strategy is TSM3. The strategy TSM4 is
dominant in a very little domain with $q \approx 0.5$ and $\gamma \approx 5$\%.
For $\gamma$ less than 5\% and large $q$ (but less than $0.5$), SM is the
dominant strategy confirming the observation at the end of section \ref{sec_diff_adjust}.

\begin{figure}[h]
  \resizebox{400pt}{300pt}{\includegraphics{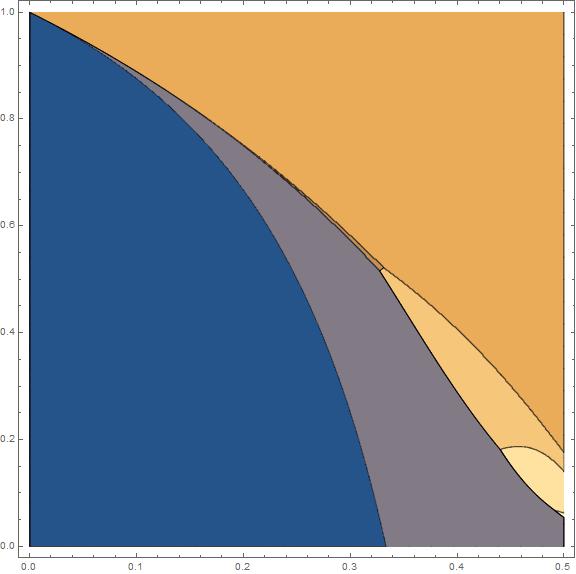}}
  \caption{TSM with $A = 2, 3, 4$ vs other strategies (HM, SM, LSM, EFSM)}
\end{figure}

\pagebreak 

\appendix

\section{Poisson processes and random walk.}\label{app_A}

 Let $N$ and $N'$ be two independent Poisson processes with parameters $\alpha$ and 
 $\alpha'$ starting at $0$: $N (0) = N' (0) = 0$. For $n, m, j, M\in \mathbb{N}$, with $m \leq j\leq M$, let 

 \begin{align*}
 \bar{S}_n &= \inf \{ t \in \mathbb{R}_{+}; N (t) + N' (t) \geq n \} \\
 \mathbf{X}_n &= (N - N') (\bar{S}_n)  \\
%  \end{align*}
% 
% Also, for $(m, M) \in \mathbb{N}^2$ with $m \leq j\leq M$, let 
%  \begin{align*}
  \tau' &= \inf \{ t \in \mathbb{R}_+ ; N' (t) - N (t) = m \} \\
  \tau'' &= \inf \{ t \in \mathbb{R}_+ ; N (t) - N' (t) = M - m \}\\
  \tau &= \tau' \wedge \tau''\\
%  \tau_m &= \tau_{0, m}\\
  \nu_{m, j} &= \inf \{ i \in \mathbb{N};\mathbf{X}_i = j - m \}\\ 
 % \nu_{m, 0} &= \inf \{ i \in \mathbb{N};\mathbf{X}_i = - m \}\\ 
  \nu_m &= \nu_{m, 0} \wedge \nu_{m,M}
 \end{align*}
 
%For $(m, M) \in \mathbb{N}^2$ with $m \leq M$, we define $\tau_m = \tau_{0, m}$

\begin{theorem}
  We have that $(\mathbf{X}_n)_{n \in \mathbb{N}}$ is a random walk with a
  probability $p = \frac{\alpha}{\alpha + \alpha'}$ (resp. $q =
  \frac{\alpha'}{\alpha + \alpha'}$) to move to the right (resp. left).
  Moreover, if $\tau_0 = \frac{1}{\alpha + \alpha'}$, we have
  \begin{align*}
    \mathbb{P} [\tau = \tau'] & = \mathbb{P} [\nu_m = \nu_{m, 0}] =  \frac{\lambda^m - \lambda^M}{1 - \lambda^M} \\
    \frac{\mathbb{E} [\tau]}{\tau_0} & = \mathbb{E} [\nu_m] =  \frac{M}{p - q}  \left( \frac{1 - \lambda^m}{1 - \lambda^M} -
    \frac{m}{M} \right) 
  \end{align*}
\end{theorem}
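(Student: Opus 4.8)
The plan is to identify the discrete process $(\mathbf{X}_n)_{n\in\mathbb{N}}$ obtained by sampling the difference $N-N'$ at the jump times $\bar S_n$ of the superposed Poisson process $N+N'$, and to show it is exactly the random walk of Section \ref{sec_hicker}, after which the two displayed identities are immediate translations of Theorem \ref{en} and of the classical gambler's ruin formula it quotes. First I would recall that the superposition $N+N'$ of two independent Poisson processes with parameters $\alpha$ and $\alpha'$ is again a Poisson process, now with parameter $\alpha+\alpha'$, so its inter-arrival times are i.i.d.\ exponential with mean $\tau_0=1/(\alpha+\alpha')$; in particular $\bar S_n$ is a.s.\ finite and increasing, and $\mathbb{E}[\bar S_n - \bar S_{n-1}] = \tau_0$. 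Next, by the standard thinning/competing-exponentials argument, at each jump of $N+N'$ the jump comes from $N$ with probability $p=\alpha/(\alpha+\alpha')$ and from $N'$ with probability $q=\alpha'/(\alpha+\alpha')$, independently across jumps and independently of the jump times. Hence $\mathbf{X}_n - \mathbf{X}_{n-1} = (N-N')(\bar S_n) - (N-N')(\bar S_{n-1}) = \pm 1$ with probabilities $p$ and $q$, i.i.d.; this is precisely the $\pm1$ random walk with right-probability $p$ and left-probability $q$, started at $\mathbf{X}_0 = 0$. With the absorbing barriers at $-m$ (i.e.\ $N'-N=m$) and at $M-m$ (i.e.\ $N-N'=M-m$), the shifted process $\mathbf{X}_n + m$ is the hiker of Section \ref{sec_hicker} on $[0,M]$ started at $m$, and $\nu_m$ here coincides with $\nu_m$ there.

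With the identification in hand, the two claims follow directly. The event $\{\tau = \tau'\}$ — that $N'-N$ reaches $m$ before $N-N'$ reaches $M-m$ — is the event that the walk is absorbed at the lower barrier, i.e.\ $\{\nu_m = \nu_{m,0}\}$; its probability is the classical gambler's ruin value $\frac{\lambda^m - \lambda^M}{1-\lambda^M}$ quoted in Theorem \ref{en}. For the second identity, Wald's identity applied to the i.i.d.\ inter-jump times gives $\mathbb{E}[\bar S_{\nu_m}] = \tau_0\,\mathbb{E}[\nu_m]$ (the number of jumps $\nu_m$ is a stopping time for the filtration generated by the jumps, and has finite expectation by Theorem \ref{en}), and since $\tau$ is exactly $\bar S_{\nu_m}$ — the continuous time at which the walk first hits a barrier is the $\nu_m$-th jump time of $N+N'$ — we get $\frac{\mathbb{E}[\tau]}{\tau_0} = \mathbb{E}[\nu_m]$, which equals $\frac{M}{p-q}\left(\frac{1-\lambda^m}{1-\lambda^M} - \frac{m}{M}\right)$ by Theorem \ref{en}.

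The one point requiring a little care — the main obstacle, such as it is — is the justification that $\tau$ (a continuum stopping time for the pair of Poisson processes) equals $\bar S_{\nu_m}$ and that the interchange $\mathbb{E}[\bar S_{\nu_m}] = \tau_0\mathbb{E}[\nu_m]$ is legitimate. For the first, one observes that $N-N'$ changes value only at jump times of $N+N'$, so the first time it hits $m$ or $-(M-m)$ is necessarily one of the $\bar S_n$, and the index of that jump is by definition $\nu_m$; no mass escapes to $+\infty$ because $\nu_m<\infty$ a.s.\ by \cite{F}. For the second, one notes that the inter-jump times $\bar S_n - \bar S_{n-1}$ are i.i.d.\ and integrable, $\nu_m$ is a stopping time with $\mathbb{E}[\nu_m]<\infty$, and the increments are independent of $\{\nu_m \geq n\}$ (which depends only on the directions of the first $n-1$ jumps), so Wald's identity applies verbatim. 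Everything else is bookkeeping, and the statement then records the continuous-time reformulation of the hiker results used in the body of the paper.
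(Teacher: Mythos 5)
Your proof is correct. The first identity is handled exactly as in the paper (identify $\{\tau=\tau'\}$ with $\{\nu_m=\nu_{m,0}\}$ and quote the gambler's ruin probability), but for the duration identity you take a genuinely different route. The paper applies Doob's optional stopping theorem to the compensated Poisson martingales $N(t)-\alpha t$ and $N'(t)-\alpha' t$ (as in \cite{GPM2018}) to get $\mathbb{E}[N(\tau)]=\alpha\,\mathbb{E}[\tau]$ and $\mathbb{E}[N'(\tau)]=\alpha'\,\mathbb{E}[\tau]$, then conditions on the exit side to obtain $(\alpha-\alpha')\mathbb{E}[\tau]=(M-m)\,\mathbb{P}[\tau=\tau'']-m\,\mathbb{P}[\tau=\tau']$, which yields the closed form for $\mathbb{E}[\tau]/\tau_0$ directly from the exit probabilities alone; the discrete formula for $\mathbb{E}[\nu_m]$ is then a citation to \cite{F}. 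You instead prove the identity $\mathbb{E}[\tau]=\tau_0\,\mathbb{E}[\nu_m]$ head-on, via Wald's identity applied to the i.i.d.\ exponential inter-arrival times of the superposed process $N+N'$ together with the observation that $\tau=\bar S_{\nu_m}$, and then import the value of $\mathbb{E}[\nu_m]$ from Theorem \ref{en}. Your version buys a cleaner justification of the equality $\mathbb{E}[\tau]/\tau_0=\mathbb{E}[\nu_m]$ asserted in the statement (which the paper leaves somewhat implicit, both sides simply being shown equal to the same expression), and your thinning/marking argument makes the random-walk claim more explicit than the paper's one-line assertion; the paper's version buys independence from the discrete-time formula, re-deriving $\mathbb{E}[\tau]$ from scratch using only the hitting probabilities. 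The only hypotheses you need to check for Wald --- integrability of $\nu_m$ and independence of the $n$-th inter-jump time from $\{\nu_m\ge n\}$ --- you do check, correctly invoking the independence of jump directions and jump times. No gaps.
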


\begin{proof}
  For $n \in \mathbb{N}$, we have $\mathbf{X}_{n+1}=\mathbf{X}_{n}\pm 1$ 
  and $\mathbb{P}[\mathbf{X}_{n+1}=\mathbf{X}_{n}+1]=p$.
  So, $(\mathbf{X}_n)_{n \in \mathbb{N}}$ is a random walk. The two events $\{ \tau = \tau' \}$
  and $\{ \nu_m = \nu_{m, 0} \}$ are equal. So, they have the same
  probability. The computation of $\mathbb{P} [\nu_m = \nu_{m, 0}]$ can be found in
  several places. See for example \cite{F} p. $314$. Using Doob's theorem,
  we have proved in \cite{GPM2018} that $N (\tau), N' (\tau), \tau \in L^1$, $\mathbb{E} [N
  (\tau)] = \alpha \mathbb{E} [\tau]$ and $\mathbb{E} [N' (\tau)] = \alpha'
  \mathbb{E} [\tau]$. We also have:
  \begin{align*}
    \mathbb{E} [N' (\tau)] & = \mathbb{E} [N' (\tau) | \tau = \tau']\, 
    \mathbb{P} [\tau = \tau'] +\mathbb{E} [N' (\tau) | \tau = \tau'']\, 
    \mathbb{P} [\tau = \tau'']\\
    & = (m +\mathbb{E} [N (\tau) | \tau = \tau']) \, \mathbb{P} [\tau =
    \tau'] + (m - M +\mathbb{E} [N (\tau) | \tau = \tau''])\,  \mathbb{P} [\tau = \tau'']\\
    & = m\mathbb{P} [\tau = \tau'] + (m - M) \, \mathbb{P} [\tau = \tau''] +\mathbb{E} [N (\tau)]
  \end{align*}
  So, we get
  \begin{equation*}
    (\alpha - \alpha') \mathbb{E} [\tau] = (M - m) \mathbb{P} [\tau =
    \tau''] - m\mathbb{P} [\tau = \tau']
  \end{equation*}
  Therefore,
  \begin{equation*}
    \frac{\mathbb{E} [\tau]}{\tau_0} =  \frac{1}{p - q}  \left( (M - m)
    \mathbb{P} [\tau = \tau''] - m\mathbb{P} [\tau = \tau']\right ) 
    = \frac{M}{p - q}  \left( \mathbb{P} [\tau = \tau''] - \frac{m}{M} \right) 
  \end{equation*}
  and we get the result. See also \cite{F} p. $317$.
\end{proof}

\section{Proof of a result of F. Stern.}\label{rw}

We prove the last equation of Theorem \ref{en} that is a result of F.
Stern (see \cite{St}).

\medskip

\textbf{An auxiliary sequence.} We study first the sequence $(u_n)_{n \geq 0}$.

\begin{definition}\label{equndef}
  Let $(u_n)_{n \geq 0}$ be the sequence defined by induction by $u_0 = 1$ and for
  $n \geq 1$,
  \begin{equation*}
    u_n = \lambda \, \frac{1 - \lambda^n}{1 - \lambda^{n + 2}} \, u_{n - 1} +
    \frac{1}{p}\,   \frac{1 - \lambda^{n + 1}}{1 - \lambda^{n + 2}} 
  \end{equation*}
  
\end{definition}
The computation  of $u_1$ and $u_2$ gives 
\begin{align*}
\frac{u_0 + u_1}{2} &= \frac{1}{1 - pq}\\ 
\frac{u_1 + u_2}{2} &= \frac{1}{1 - 2 pq}
\end{align*}
Let $l$ be the solution to $l = \lambda l + \frac{1}{p}$, that is $l = \frac{1}{p - q}>1$, and
we have
  \begin{align*}
  l - u_n &=  \lambda \, \frac{1 - \lambda^n}{1 - \lambda^{n + 2}}  (l - u_{n -
  1}) + \frac{2}{p} \,  \frac{\lambda^{n + 1}}{1 - \lambda^{n + 2}}\\
  &\leq \lambda (l - u_{n-1}) + \frac{2}{p} \, \frac{\lambda^{n + 1}}{1 - \lambda}
  \end{align*}
Then, by induction, we have 
$$
0 \leq l - u_n \leq \lambda^n 
(l - u_0) + \frac{2 n}{p - q} \lambda^{n + 1}
$$ 
and therefore 
$$
\underset{n \rightarrow
\infty}{\lim} u_n = \frac{1}{p - q}
$$

We have a closed-form formula for $u_n$ and its partial sums.

\begin{proposition}
  For $(m, n, M) \in \mathbb{N}^3$ with $2 \leq M$ and $1 \leq m
  \leq M - 1$, we have :
  \begin{align}
    u_n & =  \frac{1 - (2 n + 3) \lambda^{n + 1} + (2 n + 3) \lambda^{n + 2}
    - \lambda^{2 n + 3}}{p (1 - \lambda)  (1 - \lambda^{n + 1})  (1 -
    \lambda^{n + 2})}  \label{closedformun}\\
    \sum_{i = M - 1 - m}^{M - 2} u_i & =  \frac{m \lambda^m - (2 M - m)
    \lambda^M + (2 M - m) \lambda^{M + m} - m \lambda^{2 M}}{p (1 - \lambda) 
    (\lambda^m - \lambda^M)  (1 - \lambda^M)}  \label{closedformpartialun}
  \end{align}
\end{proposition}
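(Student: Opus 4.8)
The plan is to establish the closed-form formula \eqref{closedformun} for $u_n$ first, and then deduce the partial-sum formula \eqref{closedformpartialun} by telescoping. For the closed form, I would proceed by strong induction on $n$, using the recursion from Definition \ref{equndef}. The base case $u_0 = 1$ is checked by substituting $n = 0$ into the right-hand side of \eqref{closedformun}: the numerator becomes $1 - 3\lambda + 3\lambda^2 - \lambda^3 = (1-\lambda)^3$ and the denominator is $p(1-\lambda)(1-\lambda)(1-\lambda^2) = p(1-\lambda)^3(1+\lambda)$, so the ratio is $\frac{1}{p(1+\lambda)} = \frac{1}{p+q} = 1$ since $p + q = 1$; I would also verify $u_1$ similarly against the given value $\frac{u_0+u_1}{2} = \frac{1}{1-pq}$ as a consistency check. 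For the inductive step, I would substitute the induction hypothesis for $u_{n-1}$ into
$$
u_n = \lambda\,\frac{1-\lambda^n}{1-\lambda^{n+2}}\,u_{n-1} + \frac{1}{p}\,\frac{1-\lambda^{n+1}}{1-\lambda^{n+2}}
$$
and verify that the result simplifies to \eqref{closedformun}. Both terms have a common denominator factor $p(1-\lambda)(1-\lambda^{n+2})$, and after clearing it the claim reduces to the polynomial identity
$$
\lambda(1-\lambda^n)\,\frac{1-(2n+1)\lambda^n+(2n+1)\lambda^{n+1}-\lambda^{2n+1}}{1-\lambda^n}\cdot\frac{1}{1} + (1-\lambda^{n+1})^2 = 1-(2n+3)\lambda^{n+1}+(2n+3)\lambda^{n+2}-\lambda^{2n+3},
$$
where I have used that the numerator of $u_{n-1}$ in \eqref{closedformun} (with $n$ replaced by $n-1$) has a factor structure allowing the $\frac{1-\lambda^n}{1-\lambda^{n+2}}$ prefactor to partially cancel against the $\frac{1}{1-\lambda^n}$ coming from the $u_{n-1}$-denominator; this is the one step requiring care with bookkeeping of exponents. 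Expanding both sides as polynomials in $\lambda$ and matching coefficients (or better, recognizing the numerator of \eqref{closedformun} as $(1-\lambda)^3 P_{n+1}(\lambda)$ in the notation analogous to $P_A$ of the main theorem, and tracking the three-fold vanishing at $\lambda=1$) completes the verification.

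For the partial-sum formula \eqref{closedformpartialun}, the natural approach is to find a telescoping expression. I would look for a sequence $(v_n)$ with $u_n = v_n - v_{n+1}$ (or $v_{n+1} - v_n$), guessing from the structure of \eqref{closedformpartialun} that $v_n$ should be of the form $\frac{c(n)\,\lambda^{\,?}}{p(1-\lambda)(1-\lambda^{n+1})(\dots)}$; alternatively, since the summand in \eqref{closedformpartialun} on the right depends only on the combination $m$ and $M$ through $\lambda^m$ and $\lambda^M$, I would differentiate the identity $\sum_{i} u_i$ with respect to the index range by checking the one-step difference: verify that
$$
\Big(\text{RHS of \eqref{closedformpartialun} with } m \mapsto m+1\Big) - \Big(\text{RHS with } m\Big) = u_{M-2-m},
$$
which is again a rational-function identity in $\lambda$ that follows from \eqref{closedformun} after clearing denominators, together with the base case of a single term $\sum_{i=M-2}^{M-2} u_i = u_{M-2}$ corresponding to $m=1$. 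Setting $m = 1$ in the right-hand side of \eqref{closedformpartialun} and checking it equals $u_{M-2}$ from \eqref{closedformun} (with $n = M-2$) is a direct substitution.

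The main obstacle I anticipate is purely computational: the inductive step for \eqref{closedformun} and the difference identity for \eqref{closedformpartialun} both amount to verifying polynomial identities in $\lambda$ of degree growing linearly in $n$ (or $M$), and organizing the algebra so that the cancellations are transparent rather than a brute-force coefficient match. The cleanest route is probably to factor out $(1-\lambda)^3$ everywhere — writing each numerator as $(1-\lambda)^3$ times a polynomial of the shape $P_k(\lambda)$ from the theorem statement — and to use the fact, noted in the excerpt, that $1 - kX^{k-1} + kX^{k+1} - X^{2k}$ has a triple zero at $X=1$; this reduces each identity to a lower-degree polynomial identity that can be checked by a short induction or by comparing values. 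No genuinely deep idea is needed; the content is entirely in the bookkeeping, and once \eqref{closedformun} is in hand, \eqref{closedformpartialun} is a routine telescoping argument, and the last equation of Theorem \ref{en} follows by combining \eqref{closedformpartialun} with the already-established expression $\mathbb{E}[\nu_m \mid \nu_m = \nu_{m,0}] = \sum_{i=M-1-m}^{M-2} u_i$ once that identification of the conditional expectation with a partial sum of the $u_i$ is made (which I would establish by setting up the one-step conditional recursion for $\mathbb{E}[\nu_m \mid \nu_m = \nu_{m,0}]$ and matching it to the defining recursion of $u_n$).
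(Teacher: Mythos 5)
Your proposal follows essentially the same route as the paper: verify the closed form \eqref{closedformun} by induction on $n$ (base case $u_0=1$, then check that the formula satisfies the recursion of Definition \ref{equndef}), and then obtain \eqref{closedformpartialun} by induction on $m$ starting from the single-term case $m=1$. One small bookkeeping slip: the common denominator in the inductive step is $p(1-\lambda)(1-\lambda^{n+1})(1-\lambda^{n+2})$, so the reduced polynomial identity is $\lambda\bigl(1-(2n+1)\lambda^{n}+(2n+1)\lambda^{n+1}-\lambda^{2n+1}\bigr)+(1-\lambda)(1-\lambda^{n+1})^{2}=1-(2n+3)\lambda^{n+1}+(2n+3)\lambda^{n+2}-\lambda^{2n+3}$ (your display omits the factor $1-\lambda$), which does check out, so the argument is sound.
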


\begin{proof}
  We have that the right hand side of (\ref{closedformun}) equals to 
  $1$ when $n = 0$ and satisfies the induction from Definition \ref{equndef} when $n \in
  \mathbb{N}^{\ast}$. Hence, by induction, we get (\ref{closedformun}) for all $n$. 
  Therefore, equation (\ref{closedformpartialun}) is true for $m = 1$
  and any $M \geq 2$. Then equation (\ref{closedformpartialun}) follows by
  induction on $m$ using (\ref{closedformun}).
\end{proof}

\begin{corollary}
  Let $M \in \mathbb{N}$ with $M \geqslant 2$. We have 
  $$
  \frac{u_{M - 3} + u_{M - 2}}{2} = \frac{1}{p} \cdot \frac{1 - (M - 1) \lambda^{M - 2} + (M - 1)
  \lambda^M - \lambda^{2 M - 2}}{(1 - \lambda)  (1 - \lambda^{M - 2})  (1 -
  \lambda^M)}
  $$
\end{corollary}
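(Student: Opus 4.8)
The plan is to avoid manipulating the two summands $u_{M-3}$ and $u_{M-2}$ separately, and instead to read the identity off directly from the closed form for partial sums, equation (\ref{closedformpartialun}).

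First I would specialize (\ref{closedformpartialun}) to $m=2$. This is legitimate whenever $M\geq 3$ (and $M\geq 3$ is the only range that matters, since $\nu_2$ lives on $[1,M-1]$, which must contain the starting point $2$; for $M=2$ the right-hand side of the corollary is a formal $0/0$). The summation range $M-1-m\leq i\leq M-2$ becomes exactly $M-3\leq i\leq M-2$, so
$$
u_{M-3}+u_{M-2}=\frac{2\lambda^2-(2M-2)\lambda^M+(2M-2)\lambda^{M+2}-2\lambda^{2M}}{p(1-\lambda)(\lambda^2-\lambda^M)(1-\lambda^M)}.
$$

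Next I would divide by $2$ and pull out the common factor $\lambda^2$ from numerator and denominator: in the numerator $2\lambda^2-(2M-2)\lambda^M+(2M-2)\lambda^{M+2}-2\lambda^{2M}=2\lambda^2\bigl(1-(M-1)\lambda^{M-2}+(M-1)\lambda^{M}-\lambda^{2M-2}\bigr)$, and in the denominator $\lambda^2-\lambda^M=\lambda^2(1-\lambda^{M-2})$. Cancelling $2\lambda^2$ produces precisely the claimed expression $\frac1p\cdot\frac{1-(M-1)\lambda^{M-2}+(M-1)\lambda^M-\lambda^{2M-2}}{(1-\lambda)(1-\lambda^{M-2})(1-\lambda^M)}$.

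I do not expect a real obstacle here: the content is already contained in the preceding Proposition, and the corollary is a one-line specialization. The only points requiring a little care are the index relabelling forced by $m=2$ and the cancellation of $\lambda^2$, which is exactly what converts the exponents $M,M+2,2M$ appearing in (\ref{closedformpartialun}) into the shifted exponents $M-2,M,2M-2$ of the statement. As a consistency check one can set $M=3$ and $M=4$ and match the answer against the values $\frac{u_0+u_1}{2}=\frac{1}{1-pq}$ and $\frac{u_1+u_2}{2}=\frac{1}{1-2pq}$ recorded just after Definition \ref{equndef}, using $\lambda=q/p$ and $p+q=1$. Alternatively — the messier route — one could substitute $n=M-3$ and $n=M-2$ into (\ref{closedformun}) directly and combine the two fractions over the common denominator $p(1-\lambda)(1-\lambda^{M-2})(1-\lambda^{M-1})(1-\lambda^M)$, but the partial-sum formula makes this unnecessary.
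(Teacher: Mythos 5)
Your proof is correct and matches the paper's (implicit) derivation: the corollary is meant to follow immediately from equation (\ref{closedformpartialun}) with $m=2$, after cancelling the common factor $\lambda^2$, exactly as you do. Your observation that the statement really requires $M\geq 3$ (the paper's ``$M\geq 2$'' is a slip, since $u_{M-3}$ and the factor $1-\lambda^{M-2}$ both degenerate at $M=2$) and your numerical checks at $M=3,4$ are correct as well.
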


\textbf{Proof of the last equation in Theorem \ref{en}.}

  Note that for any $t \in \mathbb{N}$ and  $i \in [ 0, M - 1]$ 
  $$
  \mathbb{P} [\nu_m = \nu_{m, 0} | \mathbf{X}_t = i] =
  \mathbb{P} [\nu_i = \nu_{i, 0}] = \frac{\lambda^i - \lambda^M}{1 -
  \lambda^M}
  $$ 
  is independent of $t$ and $m$. Therefore, for any $t \in
  \mathbb{N}$ and $i \in [ 1, M - 1 ]$,
  \begin{align*}
    &\mathbb{P} [(\mathbf{X}_{t + 1} = i \pm 1) | (\mathbf{X}_t = i) \wedge
    (\nu_m = \nu_{m, 0}) ]  = \frac{\mathbb{P} [(\mathbf{X}_{t +
    1} = i \pm 1) \wedge (\nu_m = \nu_{m, 0}) | \mathbf{X}_t = i
    ]}{\mathbb{P} [\nu_m = \nu_{m, 0} | \mathbf{X}_t = i
    ]}\\
    & = \mathbb{P} [(\mathbf{X}_{t + 1} = i \pm 1) | \mathbf{X}_t = i
    ] \cdot \frac{\mathbb{P} [\nu_m = \nu_{m, 0} | \mathbf{X}_{t +
    1} = i  \pm 1]}{\mathbb{P} [\nu_m = \nu_{m, 0} | \mathbf{X}_t =
    i ]}
  \end{align*}
  is also independent of $t$ and $m$. Let 
  $$
  \tilde{\mathbb{P}} (i, i \pm 1)
  = \mathbb{P} [(\mathbf{X}_{t + 1} = i \pm 1) | (\mathbf{X}_t = i)
  \wedge (\nu_m = \nu_{m, 0}) ]
  $$ 
  More precisely, we have
  \begin{equation}
    \tilde{\mathbb{P}} (i, i + 1) = p \cdot \frac{\lambda^{i + 1} -
    \lambda^M}{\lambda^i - \lambda^M}  \label{piip1}
  \end{equation}
  and
  \begin{equation}
    \tilde{\mathbb{P}} (i, i - 1) = q \cdot \frac{\lambda^{i - 1} -
    \lambda^M}{\lambda^i - \lambda^M}  \label{piim1}
  \end{equation}
  We recognize in $\tilde{\mathbb{P}}$ the $h$-Doob transform of
  $\mathbb{P}$ with
  \begin{equation*}
    h (i, j) = \frac{\lambda^j - \lambda^M}{\lambda^i - \lambda^M}
  \end{equation*}
  i.e., $\tilde{P} (i, j) = P (i, j) h (i, j)$. We check that
  \begin{equation}
    \tilde{\mathbb{P}} (i, i + 1) + \tilde{\mathbb{P}} (i, i - 1)  =  1 
    \label{s1}
  \end{equation}
  and
  \begin{equation}
    \tilde{\mathbb{P}} (M - 1, M - 2) = \tilde{\mathbb{P}} (0, 0) = 1
    \label{paa1a2}
  \end{equation}
  The probability $\tilde{\mathbb{P}}$ is the probability $\mathbb{P}$
  twisted by the condition that the hiker exits $[ 1, M - 1
  ]$ at $1$. In particular, we have for all $m \in [ 0, M ]$
  \begin{equation}
    \mathbb{E} [\nu_m | \nu_m = \nu_{m, 0} ] = \tilde{\mathbb{E}}
    [\nu_{m, 0}] = \tilde{\mathbb{E}} [\nu_m]
  \end{equation}
  where $\tilde{\mathbb{E}}$ is
  the expectation taken for the probability $\tilde{\mathbb{P}}$.

  We define for $i \in [ 0, M - 1]$, 
  $$
    v_{i, M} = \tilde{\mathbb{E}} [\nu_i] 
  $$
  and for $i {\not =} 0$ 
  \begin{equation}
    \alpha_{i, M}  =  v_{i, M} - v_{i - 1, M}  \label{defai}
  \end{equation}
  Since $v_{0, M} = 0$, we have
  \begin{equation}
    v_{i, M} = \alpha_{1, M} + \ldots + \alpha_{i, M}  \label{viM}
  \end{equation}
  All these quantities depend on $M$ since $\nu_m = \nu_{m, 0}
  \wedge \nu_{m, M}$. Note that by (\ref{paa1a2}), we have,
  \begin{equation}
    \alpha_{M - 1, M} =  1  \label{paa1}
  \end{equation}
  Moreover, by (\ref{s1}) and the Markov property, we have
  \begin{align*}
    v_{i, M} & =  \tilde{\mathbb{P}} (i, i - 1) \cdot (1 + v_{i - 1, M}) +
    \tilde{\mathbb{P}} (i, i + 1) \cdot (1 + v_{i + 1, M})\\
    & = \tilde{\mathbb{P}} (i, i - 1) \cdot v_{i - 1, M} +
    \tilde{\mathbb{P}} (i, i + 1) \cdot (v_{i, M} + \alpha_{i + 1, M}) + 1
  \end{align*}
  So, 
  \begin{equation*}
    \tilde{\mathbb{P}} (i, i - 1) \cdot v_{i, M}  =  \tilde{\mathbb{P}}
    (i, i - 1) \cdot v_{i - 1, M} + \tilde{\mathbb{P}} (i, i + 1) \cdot
    \alpha_{i + 1, M} + 1
  \end{equation*}
  and 
  \begin{equation}
    \alpha_{i, M} = \frac{\tilde{\mathbb{P}} (i, i +
    1)}{\tilde{\mathbb{P}} (i, i - 1)} \alpha_{i + 1, M} +
    \frac{1}{\tilde{\mathbb{P}} (i, i - 1)} 
  \end{equation}
  Hence we can compute $\alpha_{i, M}$ by induction on $i$ from $i = M -
  1$ to $i = 1$ with the help of (\ref{piip1}), (\ref{piim1}) and
  (\ref{paa1}). We get also $v_{i, M}$ by (\ref{viM}). Explicitly, set
  $u_{i, M} = \alpha_{M - 1 - i, M}$. Then, $u_{0, M} = 1$ and
  \begin{align*}
    u_{i, M} & =  \frac{\tilde{\mathbb{P}} (M - i - 1, M -
    i)}{\tilde{\mathbb{P}} (M - i - 1, M - i - 2)} u_{i - 1, M} +
    \frac{1}{\tilde{\mathbb{P}} (M - i - 1, M - i - 2)} \nonumber\\
    & = \frac{p \cdot \frac{\lambda^{M - i} - \lambda^M}{\lambda^{M - i -
    1} - \lambda^M}}{q \cdot \frac{\lambda^{M - i - 2} - \lambda^M}{\lambda^{M
    - i - 1} - \lambda^M}} u_{i - 1, M} + \frac{1}{q \cdot \frac{\lambda^{M -
    i - 2} - \lambda^M}{\lambda^{M - i - 1} - \lambda^M}} \nonumber\\
    & = \lambda \frac{1 - \lambda^i}{1 - \lambda^{i + 2}} u_{i - 1, M} +
    \frac{1}{p}  \frac{1 - \lambda^{i + 1}}{1 - \lambda^{i + 2}} 
    %\label{recuu}
  \end{align*}
  Therefore $u_{i, M}=u_i$ does not depend on $M$. Then we have by (\ref{viM})
  \begin{equation*}
    \tilde{\mathbb{E}} [\nu_m] = \sum_{i = M - 1 - m}^{M - 2} u_i 
  \end{equation*}
  and the result then follows from (\ref{closedformpartialun}).

\end{document}